\documentclass{article}
\usepackage[preprint]{neurips_2026}
\usepackage[utf8]{inputenc}
\usepackage[T1]{fontenc}
\usepackage{hyperref}
\usepackage{url}
\usepackage{booktabs}
\usepackage{amsfonts}
\usepackage{amsmath,amssymb,amsthm}
\usepackage{nicefrac}
\usepackage{microtype}
\usepackage{xcolor}
\usepackage{algorithm}
\usepackage{algorithmic}
\usepackage{graphicx}
\usepackage{enumitem}
\usepackage{mathtools}

\newtheorem{theorem}{Theorem}
\newtheorem{lemma}[theorem]{Lemma}
\newtheorem{proposition}[theorem]{Proposition}
\newtheorem{corollary}[theorem]{Corollary}
\newtheorem{definition}[theorem]{Definition}
\newtheorem{remark}[theorem]{Remark}
\newtheorem{assumption}[theorem]{Assumption}

\newcommand{\cS}{\mathcal{S}}
\newcommand{\cA}{\mathcal{A}}
\newcommand{\cD}{\mathcal{D}}
\newcommand{\cM}{\mathcal{M}}
\newcommand{\cT}{\mathcal{T}}
\newcommand{\E}{\mathbb{E}}
\newcommand{\R}{\mathbb{R}}
\newcommand{\eps}{\varepsilon}
\newcommand{\dist}{\operatorname{dist}}
\newcommand{\diam}{\operatorname{diam}}
\newcommand{\gap}{\operatorname{gap}}

\title{Locality, Not Spectral Mixing, Governs Direct Propagation in Distributed Offline Dynamic Programming}

\author{\textbf{Ibne Farabi Shihab}\thanks{Corresponding author: \texttt{ishihab@iastate.edu}.}\textsuperscript{1}}

\begin{document}

\maketitle

\begin{abstract}
We study the communication complexity of distributed offline dynamic programming, where a fixed batch dataset is partitioned across \(M\) machines connected by the data-induced dependency graph. We compare two paradigms: direct boundary-value propagation, which follows Bellman dependencies, and gossip averaging, which mixes local estimates. Our results show that locality is the fundamental driver of round complexity. In particular, we prove that no method can achieve \(\varepsilon\)-accuracy in fewer than \(L_\varepsilon = \left\lfloor \log(1/2\varepsilon) / \log(1/\gamma) \right\rfloor\) rounds on graphs of diameter at least \(L_\varepsilon\), and we show that direct propagation matches this scaling up to constants, attaining error \(O(\gamma^T/(1-\gamma) + \delta/(1-\gamma))\) after \(T\) rounds. In contrast, gossip-style fitted value iteration incurs an additional \(1/\mathrm{gap}(W)\) dependence in both convergence rate and asymptotic error. We also prove bandwidth-sensitive lower bounds on path topologies and extend the analysis to asynchronous systems with bounded delays. Together, these results show that spectral dependence is an artifact of gossip-based algorithms, whereas locality is the intrinsic barrier in distributed offline dynamic programming.\end{abstract}

\section{Introduction}
Offline reinforcement learning and approximate dynamic programming operate on fixed datasets collected before training. This regime is increasingly relevant in robotics, recommendation, healthcare, and control, where online interaction may be unsafe, expensive, or impossible \citep{levine2020offline,lange2012batch}. As such datasets grow, training is naturally distributed across machines that each store only a local shard. The resulting systems question is not only whether Bellman updates can be computed locally, but how quickly local information can be integrated into a globally consistent value estimate.

This paper studies that question through the lens of \emph{round complexity}. We ask: if each machine can send arbitrarily large messages in each synchronous round, how many rounds are fundamentally necessary and sufficient to compute an $\eps$-accurate value function? This differs from the standard communication-complexity view, which measures total bits. In modern clusters, aggregate bandwidth can be large while synchronization latency is often the true bottleneck.

The central object in our analysis is the graph induced by the partitioned offline dataset. Its vertices are machines, and an edge indicates that one machine stores transitions whose next-state values are needed from another. This graph is the medium through which Bellman information must propagate.

\paragraph{Thesis.}
The main message is deliberately narrower than the one suggested by gossip-style analyses. Spectral quantities such as conductance and spectral gap are important for the convergence of \emph{averaging-based} methods, but they are not fundamental invariants of distributed offline dynamic programming itself. The fundamental lower bound is locality-based, and a direct boundary-propagation algorithm matches that locality constraint without any mixing penalty. This holds in both synchronous and bounded-delay asynchronous models.

\paragraph{Our results.}
We prove five theorem-level results.

\emph{(i) Locality lower bound (Theorem~\ref{thm:lower}).} On any graph with diameter at least
\[
L_\eps := \left\lfloor \frac{\log(1/(2\eps))}{\log(1/\gamma)} \right\rfloor,
\]
no protocol succeeds in fewer than $L_\eps$ rounds.

\emph{(ii) Direct-propagation upper bound (Theorem~\ref{thm:direct}).} A synchronous distributed value-iteration scheme that exchanges exact boundary values with neighbors computes the $T$-hop Bellman truncation after $T$ rounds, giving error
\[
\|V^{(T)}-V^\star\|_\infty \le \frac{\gamma^T}{1-\gamma} + \frac{\delta}{1-\gamma}.
\]

\emph{(iii) Spectral upper bound for gossip-FVI (Theorem~\ref{thm:upper}).} Gossip-style FVI converges in
\[
O\!\left(\frac{1}{(1-\gamma)\gap(W)}\log\frac{1}{\eps}\right)
\]
rounds with steady-state error $O(\delta/((1-\gamma)\gap(W)))$. Both the rate and the asymptotic floor are worse than direct propagation by the same $1/\gap(W)$ factor: the dependence on spectral gap is real for the algorithm class but not fundamental to the problem.

\emph{(iv) Multi-round bit lower bound (Theorems~\ref{thm:bits}--\ref{thm:bits-rand}).} On path topologies, any zero-error protocol for $m$ independent reward coordinates spread across distance $L$ must communicate $\Omega(mL)$ bits in expectation, and bandwidth $B$ per edge per round forces $\Omega(\max\{L,m/B\})$ rounds.

\emph{(v) Asynchronous extension (Theorems~\ref{thm:async-lower}--\ref{thm:async-upper}).} Under bounded message delay $D$, direct propagation reaches $\eps$-accuracy in $O(DL_\eps)$ wall-clock rounds, matching an $\Omega(DL_\eps)$ asynchronous lower bound.

\paragraph{What this means.}
Combining (i) and (ii) shows that direct propagation is governed by discounted locality: information only matters out to the horizon at which $\gamma^L \approx \eps$. Combining (ii) and (iii) shows that the spectral-gap dependence in gossip-style methods is algorithmic, not fundamental — and it appears in two places, not one. Combining (iv) and (v) shows that the same locality framework controls bit complexity and asynchronous wall-clock rounds.

\paragraph{What we do not claim.}
We do not claim a fully graph-independent characterization for the most general distributed model. The upper bound is for direct boundary propagation, in terms of discounted dependency radius. We do not address nonlinear function approximation or end-to-end statistical sample complexity. Theorem~\ref{thm:direct} is stated under a uniform local Bellman approximation condition; deriving the $\delta$ term from concentration is standard FVI work \citep{munos2008finite,antos2008learning} not the focus here.

\paragraph{Contributions.}
The five contributions correspond directly to the five results listed above, supplemented by canonical-topology consequences (Section~\ref{sec:families}) and a hierarchical federated corollary (Corollary~\ref{cor:fed}).

\section{Related Work}
\label{sec:related}

\paragraph{Offline reinforcement learning.}
Offline RL has been studied extensively from the perspectives of statistical generalization, pessimism, and function approximation \citep{levine2020offline,lange2012batch,kumar2020conservative,kostrikov2022offline,jin2021pessimism,xie2021bellman}. The fitted-value-iteration framework that underlies our local Bellman operators dates back at least to \citet{ernst2005tree,riedmiller2005neural,munos2008finite,antos2008learning}. Our focus is orthogonal: we treat the dataset as fixed and study the communication structure induced by partitioning it across machines.

\paragraph{Distributed and federated reinforcement learning.}
Recent work studies distributed or federated variants of TD-learning and Q-learning, often emphasizing communication efficiency or speedup from collaboration \citep{khodadadian2022federated,woo2023blessing,doan2019finite}. \citet{ziomek2022settling} study single-round bit complexity for distributed offline RL. Our Theorem~\ref{thm:bits} complements that line by proving multi-round bit lower bounds that expose an explicit tradeoff between locality, bandwidth, and rounds. By contrast with this prior work, our main model allows unrestricted message size per round, isolating the sequential bottleneck created by Bellman dependencies.

\paragraph{Parallel and asynchronous dynamic programming.}
Direct boundary propagation is a distributed implementation of value iteration rather than a new algorithmic primitive. \citet{bertsekas1982distributed} introduced distributed dynamic programming and proved convergence under fairly weak asynchronous conditions; \citet{tsitsiklis1986distributed} gave a general theory of asynchronous fixed-point iteration; and \citet{bertsekas1989parallel} treat parallel and distributed DP comprehensively. Our contribution is the round-complexity interpretation: pairing direct propagation with locality lower bounds, contrasting it with gossip, and extending to bandwidth-aware and bounded-delay regimes.

\paragraph{Consensus, gossip, and decentralized optimization.}
The role of spectral gap in averaging and decentralized optimization is classical \citep{tsitsiklis1986distributed,boyd2006randomized,dimakis2010gossip,nedic2009distributed}. Our spectral analysis in Section~\ref{sec:upper} adapts this viewpoint to gossip-style Bellman iteration. The direct-propagation result shows why this viewpoint is suboptimal for dynamic programming: Bellman updates require directional successor information, not global consensus.

\paragraph{Locality and indistinguishability in distributed computing.}
The lower-bound proof uses radius-limited indistinguishability familiar from distributed computing, where information cannot influence nodes outside a radius-$R$ neighborhood in fewer than $R$ synchronous rounds \citep{kushilevitz1997communication,linial1992locality,peleg2000distributed}.

\section{Problem Setup}
\label{sec:setup}

\subsection{MDP and Offline Data}
We consider a discounted MDP
\[
\cM=(\cS,\cA,P,r,\gamma),
\]
where $\cS$ is the state space, $\cA$ the action space, $P(\cdot\mid s,a)$ the transition kernel, $r(s,a)\in[0,1]$ the reward, and $\gamma\in(0,1)$ the discount factor. The Bellman optimality operator is
\[
(\cT V)(s):=\max_{a\in\cA}\Bigl(r(s,a)+\gamma\E[V(s')\mid s,a]\Bigr),
\]
with unique fixed point $V^\star$.

An offline dataset $\cD=\{(s_i,a_i,r_i,s_i')\}_{i=1}^N$ is partitioned across $M$ machines:
$\cD=\cD_1\sqcup\cdots\sqcup\cD_M$. Machine $j$ stores shard $\cD_j$ and the value parameters for its owned states $\cS_j \subseteq \cS$, with $\cS = \bigsqcup_j \cS_j$.

\subsection{Data-Induced Dependency Graph}

\begin{definition}[Empirical transition graph]
The empirical transition graph $G_{\cD}=(V,E,w)$ has vertex set $V=[M]$. There is a directed edge $(j,k)$ whenever some transition in $\cD_j$ has next state in $\cS_k$, with weight
\[
w(j,k)=\bigl|\{(s,a,r,s')\in\cD_j: s'\in\cS_k\}\bigr|.
\]
For graph distance, radius, and diameter we use the undirected support graph.
\end{definition}

\begin{definition}[Boundary states]
A state $s \in \cS_k$ is a \emph{boundary state} for $(j,k)$ if some transition in $\cD_j$ has next state $s$. We write $\partial_{j \leftarrow k}$ for the set of boundary states $k$ must communicate to $j$.
\end{definition}

\begin{definition}[Discounted dependency radius]
For accuracy $\eps \in (0,1)$,
\[
L_\eps := \max\{L\in\mathbb N_0 : \gamma^L > 2\eps\}.
\]
\end{definition}

\begin{definition}[Conductance of a reversible weighted graph]
For symmetric doubly stochastic $W\in\R^{M\times M}$ supported on $G_{\cD}$,
\[
\Phi(W):=\min_{\emptyset\neq S\subsetneq [M]} \frac{\sum_{i\in S,\,j\notin S} W_{ij}}{\min\{|S|,|S^c|\}/M}.
\]
\end{definition}

\subsection{Communication Model}
Time proceeds in synchronous rounds. In each round, every machine performs arbitrary local computation on its shard, then sends an arbitrary message to each neighbor in the support graph of $G_{\cD}$. Messages are unbounded in size. The complexity measure is the number of rounds.

An $R$-round protocol outputs $\hat V$, and is $\eps$-accurate if $\|\hat V-V^\star\|_\infty\le \eps$.

\begin{remark}[Why unrestricted message size?]
Allowing unlimited messages isolates the sequential bottleneck created by locality. Even with full one-shot transmission of local data, information cannot reach a machine at graph distance $L$ in fewer than $L$ rounds.
\end{remark}

\section{Locality Lower Bound}
\label{sec:lower}

\begin{theorem}[Locality lower bound]
\label{thm:lower}
Fix $\gamma\in(0,1)$ and $\eps\in(0,1/2)$. Define
\[
L_\eps:=\max\{L\in\mathbb{N}_0: \gamma^L>2\eps\}
=\left\lfloor \frac{\log(1/(2\eps))}{\log(1/\gamma)}\right\rfloor.
\]
For any synchronous distributed protocol communicating only along the support graph of $G_{\cD}$: if $G_{\cD}$ contains two machines at graph distance at least $L_\eps$, then no protocol using fewer than $L_\eps$ rounds can guarantee $\|\hat V-V^\star\|_\infty\le \eps$ for every MDP consistent with $G_{\cD}$. Equivalently, every uniformly $\eps$-accurate protocol satisfies
\[
R\ge L_\eps = \Omega\!\left(\frac{\log(1/\eps)}{\log(1/\gamma)}\right)
\]
whenever $\diam(G_{\cD})\ge L_\eps$.
\end{theorem}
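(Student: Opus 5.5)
The argument is a two-instance indistinguishability construction. Fix machines $u,v$ with $\dist(u,v)\ge L_\eps$ in the support graph of $G_{\cD}$ and a protocol using $R<L_\eps$ rounds; write $R\le L_\eps-1$. We build two MDPs $\cM_0,\cM_1$ consistent with the same dependency graph (same shards up to reward labels, same transition structure, hence same $G_{\cD}$). They will differ only in rewards stored at machine $v$, yet will have optimal values at some state owned by $u$ that differ by more than $2\eps$. By the triangle inequality, no single output $\hat V(s_u)$ can then be within $\eps$ of both.

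\textbf{Step 1 (locality of information).} By induction on rounds, the state of machine $j$ after $t$ rounds is a function only of the shards of machines within graph distance $t$ of $j$, together with any shared randomness. The base case $t=0$ is immediate. For the inductive step, a round-$t$ message from neighbor $k$ depends only on data within distance $t-1$ of $k$, and hence within distance $t$ of $j$. Consequently, after $R\le L_\eps-1<\dist(u,v)$ rounds, the output of $u$ has the same distribution under $\cM_0$ and $\cM_1$. For randomized protocols we use the equality of the output distributions and argue that some instance fails with positive probability, or in expectation. For deterministic protocols the outputs are simply equal.

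\textbf{Step 2 (value separation).} We take a shortest path $u=j_0,j_1,\dots,j_L=v$ with $L=\dist(u,v)\ge L_\eps$, and place a deterministic chain of states $s_0\in\cS_{j_0},\dots,s_L\in\cS_{j_L}$. Each $s_i$ has a single action moving to $s_{i+1}$ with reward $0$, so the transitions in $\cD_{j_i}$ create exactly the edge $(j_i,j_{i+1})$. The state $s_L$ is absorbing. In $\cM_0$ its reward is $0$; in $\cM_1$ its reward is $1$. Then $V^\star_0(s_0)=0$ and $V^\star_1(s_0)=\gamma^L/(1-\gamma)$.

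The plan is to compare the two instances at the closest state whose offset is still within $L_\eps$. To do this, we move the observation point to $s_{L-L_\eps}$, which is owned by a machine at distance $L_\eps$ from $v$. We then apply Step 1 to that machine with $R<L_\eps$. The values there differ by $\gamma^{L_\eps}/(1-\gamma)\ge\gamma^{L_\eps}>2\eps$, using the definition of $L_\eps$. Any $\hat V$ is therefore more than $\eps$ off on one instance.

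\textbf{Main obstacle.} The main difficulty is consistency with a \emph{given} $G_{\cD}$: the statement quantifies over MDPs consistent with that graph. All other edges and machines must be realizable without shortening the distance from the observation machine to $v$. To handle this, we keep every other shard's transitions as in $G_{\cD}$ and point them at absorbing zero-reward states inside the target machines, so that they exist but carry identical rewards in both instances. Because graph distance is determined by $G_{\cD}$ itself, Step 1's radius argument still applies. A secondary check is that the state space is shared (the $\cS_j$ are fixed); if necessary, we add the chain states to the relevant $\cS_j$ as part of choosing the instance family.
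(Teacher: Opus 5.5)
Your proposal is correct and is essentially the paper's argument: a deterministic chain along a shortest path, two instances differing only in the reward at the far endpoint, and radius-$R$ indistinguishability at the observing machine. The differences are cosmetic. You observe at distance exactly $L_\eps$ rather than $R+1$, and you use reward $1$ instead of $1-\gamma$. You also realize the remaining edges of $G_{\cD}$ explicitly with zero-reward absorbing states, which makes precise a consistency step the paper only asserts.
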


\begin{proof}
Assume toward contradiction that some $R$-round protocol with $R<L_\eps$ is uniformly $\eps$-accurate. Then $\gamma^{R+1}>2\eps$. Choose machines $u,v$ with $\dist(u,v)\ge R+1$, and a shortest path $u=j_0,\ldots,j_{R+1}=v$.

Construct two MDPs $\cM_0,\cM_1$ that differ only at machine $v$. For each $\ell$, introduce one state $x_\ell$ on machine $j_\ell$, with single action and deterministic transitions $x_\ell \mapsto x_{\ell+1}$ for $\ell\le R$, and $x_{R+1}\mapsto x_{R+1}$. All rewards zero in $\cM_0$. In $\cM_1$, all rewards zero except $r(x_{R+1})=1-\gamma$.

Both instances are consistent with the same support graph. The radius-$R$ neighborhood of $u$ is identical in the two MDPs.

Optimal values: $V_0^\star \equiv 0$. For $\cM_1$, $V_1^\star(x_{R+1})=1$, and backward substitution yields $V_1^\star(x_0)=\gamma^{R+1}$. So $|V_1^\star(x_0)-V_0^\star(x_0)|=\gamma^{R+1}>2\eps$.

By locality, after $R$ rounds machine $u$ is influenced only by data inside its radius-$R$ neighborhood, which is identical between the two instances. The transcript at $u$ is identical, so the protocol outputs the same $\hat V(x_0)$ on both. But no number can be within $\eps$ of both $0$ and $\gamma^{R+1}$. Contradiction.
\end{proof}

\subsection{Three strengthened consequences of locality}

The indistinguishability argument extends in three directions.

\begin{theorem}[Bandwidth-sensitive multi-round bit lower bound]
\label{thm:bits}
Fix $\gamma\in(0,1)$ and integers $L,m\ge 1$. Assume $\eps < \gamma^L/4$. Then there exists a family of distributed offline DP instances on a path with $L+1$ machines such that any deterministic $\eps$-accurate protocol must communicate at least $mL$ bits in total. Under per-edge bandwidth $B$, this implies
\[
R = \Omega\!\bigl(\max\{L,\,m/B\}\bigr)
\]
communication rounds.
\end{theorem}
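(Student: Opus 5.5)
We construct a path of $L+1$ machines $j_0,\dots,j_L$, in the spirit of the proof of Theorem~\ref{thm:lower}, and then argue by a fooling-set / cut argument across each of the $L$ edges.

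\textbf{Family of instances.} For each $b\in\{0,1\}^m$ build an instance $\cM_b$ as follows. Machine $j_\ell$ owns $m$ states $x_\ell^{(1)},\dots,x_\ell^{(m)}$. Each state has a single action, and the transitions are deterministic: $x_\ell^{(i)}\mapsto x_{\ell+1}^{(i)}$ for $\ell<L$, while $x_L^{(i)}$ is absorbing. All rewards are zero except $r(x_L^{(i)})=b_i(1-\gamma)$. By backward substitution, as in Theorem~\ref{thm:lower},
\[
V_b^\star(x_0^{(i)})=b_i\gamma^L .
\]
Since $\eps<\gamma^L/4$, any $\eps$-accurate output at $j_0$ recovers $b$ exactly by thresholding $\hat V(x_0^{(i)})$ at $\gamma^L/2$. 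Every $\cM_b$ has the same support graph, namely the path.

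\textbf{Per-edge lower bound.} Fix an edge $(j_{\ell-1},j_\ell)$ with $1\le\ell\le L$. Cut the path there. The left side $\{j_0,\dots,j_{\ell-1}\}$ holds data independent of $b$, and the right side holds $b$. The left side's final output is a deterministic function of the sequence of messages sent across this edge in the right-to-left direction. To make the sequence self-delimiting over rounds, we treat the whole transcript on the edge, with round structure fixed by the protocol, as a string.

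Since distinct $b$ must yield distinct outputs at $j_0$, the map from $b$ to the transcript on this edge is injective. The transcript therefore takes $2^m$ distinct values, and some input forces at least $m$ bits across the edge. To sum over edges we need the same worst case on every edge simultaneously. I would handle this with a counting argument: $\sum_\ell|\text{transcript}_\ell(b)|$ is at least $m$ per edge for most $b$, since at most $2^{m-1}$ strings have length below $m-1$. Alternatively, I would give each edge its own independent block of $m$ bits so that the worst cases line up.

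The cleaner route is the second one: use $mL$ coordinates, with block $\ell$ placed at distance $\ell$ so that it must cross edges $1,\dots,\ell$. That construction gives a triangular sum, and it would need to be rescaled to yield exactly $mL$.

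\textbf{Main obstacle.} The expected obstacle is exactly this summation of per-edge bounds into a single worst-case input. I would first try the averaging bound. For uniform $b$, the expected transcript length on each edge is at least $\log_2(2^m)-O(1)$ by a prefix-free/Kraft argument, or exactly $m$ in entropy terms $H(\text{transcript}_\ell)\ge H(b)=m$. Linearity of expectation then gives an expected total of at least $mL-O(L)$, so some input attains it. Tidying the constant gives $\Omega(mL)$.

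\textbf{Rounds.} Under per-edge bandwidth $B$, the $m$ bits crossing the single edge $(j_0,j_1)$ need at least $m/B$ rounds. Theorem~\ref{thm:lower}'s indistinguishability argument, applied to a single coordinate at distance $L$, gives $R\ge L$. Combining the two yields $R=\Omega(\max\{L,m/B\})$.
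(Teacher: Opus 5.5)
Your construction, the thresholding at $\gamma^L/2$, the per-cut indistinguishability argument, and the rounds bound ($R\ge L$ from locality, $BR\ge m$ on the edge $(j_0,j_1)$) are the same as the paper's. The real difference is how the $L$ per-edge bounds are combined.

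The paper observes that each edge transcript takes $2^m$ values, hence ``requires $\ge m$ bits,'' and simply adds over the $L$ cuts. Read literally, this lower-bounds $\sum_\ell \max_b |t_\ell(b)|$. That transfers to the worst-case total $\max_b \sum_\ell |t_\ell(b)|$ when per-edge message lengths are fixed in advance, but not in general, because the expensive input may differ from edge to edge. This is exactly the obstacle you flagged.

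Your averaging over uniform $b$ is the right repair. Injectivity of $b\mapsto t_\ell(b)$ gives every edge $m$ bits of entropy; as you note, the right-to-left half alone is already injective, which is slightly sharper than the paper's use of the full transcript. Linearity of expectation adds the per-edge bounds, and some single instance meets the average.

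The paper's route is shorter and is exact for a per-edge worst-case cost measure. Yours yields a genuine single-instance total. Because it is distributional, it is also what the zero-error randomized extension needs: there the paper's step bounds $\E_\omega[\max_b \mathrm{Comm}]$, which does not lower-bound the worst-instance expected cost $\max_b \E_\omega[\mathrm{Comm}]$.

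One correction: you have the two conventions reversed. Suppose edge transcripts are self-delimiting, i.e.\ prefix-free with no information hidden in timing or empty messages; this is what ``$2^m$ values need $m$ bits'' tacitly assumes. Then Kraft's inequality gives $\E|t_\ell(b)|\ge H(t_\ell)=m$ exactly. So some $b$ costs at least $mL$, and you recover the stated constant rather than only $\Omega(mL)$. For merely injective strings, $H(t_\ell)=m$ alone does not bound expected length. There the optimum average length is $m-2+(m+2)2^{-m}$, which is still at least $m/2$ and so gives only $\Omega(mL)$.

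The block-per-edge construction you considered is unnecessary.
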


\begin{proof}
Path machines $j_0,\ldots,j_L$. For each coordinate $q\in[m]$ create states $x^{(q)}_0,\ldots,x^{(q)}_L$ with $x^{(q)}_\ell$ stored on $j_\ell$ and deterministic transitions $x^{(q)}_\ell \mapsto x^{(q)}_{\ell+1}$, $x^{(q)}_L \mapsto x^{(q)}_L$. For $b\in\{0,1\}^m$, define $\cM_b$ with rewards $r_b(x^{(q)}_L) = (1-\gamma)b_q$ and zero elsewhere. Then $V_b^\star(x^{(q)}_L)=b_q$ and $V_b^\star(x^{(q)}_0)=\gamma^L b_q$.

Since $\eps < \gamma^L/4$, any $\eps$-accurate estimate of $V_b^\star(x^{(q)}_0)$ recovers $b_q$ by thresholding at $\gamma^L/2$. So the outputs on $\{x^{(q)}_0\}_q$ must distinguish all $2^m$ bit vectors.

For each $r\in\{1,\ldots,L\}$, consider the cut $(L_r,R_r)$ with $L_r=\{j_0,\ldots,j_{r-1}\}$. The unique edge crossing the cut is $e_r=(j_{r-1},j_r)$. If two distinct bit vectors $b\neq b'$ induced the same transcript across $e_r$ throughout execution, then every machine in $L_r$ — including $j_0$ — would have an identical view: the instances differ only in rewards inside $R_r$, and all information from $R_r$ reaches $L_r$ only through $e_r$. The protocol would output identical estimates at $j_0$. But for any coordinate $q$ where $b_q\neq b'_q$, $|V_b^\star(x^{(q)}_0)-V_{b'}^\star(x^{(q)}_0)|=\gamma^L>2\eps$, so no single estimate can be $\eps$-accurate for both.

Hence for every cut, the transcript across the unique crossing edge takes at least $2^m$ values. Each such transcript requires $\ge m$ bits. Summing over $L$ edges yields $mL$ total bits.

For rounds: locality gives $R\ge L$. Per-edge bandwidth $B$ caps each edge transcript at $BR$ bits, so $BR\ge m$ gives $R\ge m/B$. Combining, $R = \Omega(\max\{L,m/B\})$.
\end{proof}

\begin{theorem}[Zero-error randomized extension]
\label{thm:bits-rand}
Fix $\gamma\in(0,1)$ and integers $L,m\ge 1$, with $\eps < \gamma^L/4$. Any randomized protocol that is $\eps$-accurate on every instance with zero error must, on the path instances of Theorem~\ref{thm:bits}, communicate $\Omega(mL)$ total bits in expectation. Under per-edge bandwidth $B$, this gives expected wall-clock $R = \Omega(\max\{L,m/B\})$.
\end{theorem}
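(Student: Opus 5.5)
We reuse the path instances $\{\cM_b\}_{b\in\{0,1\}^m}$ from the proof of Theorem~\ref{thm:bits}, with $\eps<\gamma^L/4$. The plan is to turn the deterministic cut argument into a statement about a single fixed random string, and then apply a source-coding (Kraft) bound to the transcripts crossing each edge. Model the randomized protocol as a deterministic protocol $\Pi_\rho$ run with a random string $\rho$; this string may consist of private coins at each machine or of public coins. Zero error means the following: for every $b$ and every $\rho$ in the support, the output of $\Pi_\rho$ on $\cM_b$ is $\eps$-accurate. So for each fixed $\rho$, $\Pi_\rho$ is a deterministic $\eps$-accurate protocol on the whole family. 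Fix $\rho$ and a cut edge $e_r$, $r\in\{1,\dots,L\}$. The indistinguishability argument from Theorem~\ref{thm:bits} then shows that the map $b\mapsto \tau_r(b,\rho)$, the transcript crossing $e_r$, is injective. If it were not, the machines in $L_r$ (whose inputs and coins are the same in both runs) would have identical views, and $j_0$ would give the same outputs on $\cM_b$ and $\cM_{b'}$. This contradicts the gap $\gamma^L>2\eps$ at a coordinate where $b$ and $b'$ differ.

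Next we need these transcripts to be self-delimiting, so that a prefix-code bound applies. In the synchronous model, the sequence of messages over $e_r$ (in both directions, round by round) is parsed by the receiving machines, so the set of possible full transcripts for fixed $\rho$ is prefix-free. To see this, take two runs whose transcripts agree on a prefix. The cut side $L_r$ behaves identically in both up to that point, so it is the protocol itself that determines when communication on $e_r$ ends. If one wants to avoid this subtlety, a standard fix is to charge $O(1)$ extra bits per message for an end marker; this changes lengths only by a constant factor once each nonempty message carries at least one bit. With an injective, prefix-free encoding of $b$, Kraft's inequality gives the following for $b$ uniform on $\{0,1\}^m$:
\[
\E_b\bigl[|\tau_r(b,\rho)|\bigr]\ \ge\ H(b)=m .
\]
Now average over $\rho$ and sum over the $L$ edge-disjoint cuts. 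This gives $\E_{b,\rho}[\text{total bits}]\ge mL$, so some fixed instance $b^\ast$ has expected communication $\E_\rho[\text{bits}]\ge mL$. That is the claimed $\Omega(mL)$ bound on expected cost for a worst-case instance.

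For rounds, locality holds for every $\rho$: the deterministic locality argument of Theorem~\ref{thm:lower}, specialized to the path, gives $R\ge L$ almost surely. For the bandwidth term, under per-edge bandwidth $B$, the transcript on $e_1$ has length at most $BR$. Taking expectations, $B\,\E[R]\ge \E|\tau_1|\ge m$ for uniform $b$, so some instance has $\E[R]\ge m/B$. Here $R$ is the possibly random termination round, and the argument works with its expectation directly. Combining the two bounds gives $\E[R]\ge\max\{L,m/B\}$, which is $\Omega(\max\{L,m/B\})$.

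The main obstacle I expect is the prefix-freeness and accounting step. With messages of variable length, and with rounds in which an edge is silent, the raw bit count may not be an injective prefix code. The pure injectivity count only gives $\E|\tau|\ge m-O(\log m)$ or so. My first approach would be to charge silent rounds and message boundaries as above, or to count "bits" as transcript symbols over a ternary alphabet $\{0,1,\text{end}\}$. Either way the lower bound stays $\Omega(mL)$ up to the constant $\log_2 3$.
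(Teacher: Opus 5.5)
Your argument is correct, but it is organized differently from the paper's.

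\textbf{The two routes.} The paper fixes $\omega$, applies the deterministic Theorem~\ref{thm:bits} to $\Pi_\omega$ as a black box, and averages over $\omega$. You instead prove, for each fixed $\rho$, a \emph{distributional} bound $\E_b|\tau_r(b,\rho)|\ge m$ under uniform $b$, via injectivity plus Kraft. You then sum this over the $L$ cuts, average over $\rho$, and only at the end extract a single hard instance $b^\ast$ by Fubini. This is the easy direction of Yao's principle.

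\textbf{What each buys.} The paper's route is a one-liner given Theorem~\ref{thm:bits}. As written, however, it lower-bounds $\E_\omega[\max_b \mathrm{Comm}(\Pi_\omega,b)]$. That quantity is at least the usual expected cost $\max_b \E_\omega[\mathrm{Comm}(\Pi_\omega,b)]$, so its closing identification goes the wrong way. In addition, the deterministic proof it relies on sums per-edge worst cases, and these need not be attained by a single instance. Your per-edge average bound is additive across cuts and keeps the quantifiers in the right order, so your route actually delivers a bound on $\max_b\E_\rho[\cdot]$. The price is the self-delimiting-transcript convention that Kraft requires.

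\textbf{One correction to your accounting remark.} An end marker per nonempty message is indeed a constant-factor overhead. Charging silent edge-rounds is not, and that charge cannot be dropped. For example, let $j_L$ send a single bit in round $1+\sum_q b_q 2^{q-1}$ and have every machine forward it immediately. Then all machines decode $b$ from arrival times, using $L$ bits in total over about $2^m$ rounds.

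What your argument establishes, and what the paper's step ``each such transcript requires $\ge m$ bits'' implicitly assumes, is therefore $\E_\rho[\text{bits}]+C\,L\,\E_\rho[R]\ge mL/\log_2 3$ on some instance, for an absolute constant $C$. This still gives $\Omega(mL)$ bits for protocols using at most $c\,m$ rounds, with $c$ a small constant. It also leaves the $\Omega(\max\{L,m/B\})$ round bound intact, because a bandwidth-$B$ slot costs at most $B+1$ symbols.
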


\begin{proof}
Let $\Pi$ be any zero-error randomized protocol. Fix a realization $\omega$ of its internal randomness. Conditioned on $\omega$, $\Pi$ becomes deterministic; because $\Pi$ has zero error, the resulting deterministic protocol $\Pi_\omega$ is $\eps$-accurate on every instance. By Theorem~\ref{thm:bits}, the communication cost of $\Pi_\omega$ is at least $mL$ bits. Taking expectation over $\omega$,
\[
\E_\omega[\mathrm{Comm}(\Pi_\omega)] \ge mL,
\]
which is the expected communication cost of $\Pi$. The bandwidth-to-rounds consequence is identical, applied per realization.
\end{proof}

\begin{remark}[Bounded-error randomized case]
A constant-success randomized lower bound should also follow by combining Yao's minimax principle with a hard product distribution over $b\in\{0,1\}^m$, but this requires a distributional communication argument we do not need for our purposes. The zero-error case suffices to make the locality/bandwidth tradeoff rigorous.
\end{remark}

\begin{theorem}[Local computation cannot beat communication radius]
\label{thm:localsteps}
For any synchronous distributed protocol in which each communication round may contain arbitrary local computation (any number of local Bellman updates, optimization steps, or regression subroutines): after $R$ communication rounds, the output on machine $j$ is a function only of data, randomness, and initialization in the graph-radius-$R$ neighborhood of $j$. In particular, Theorem~\ref{thm:lower} holds unchanged with arbitrary local computation per round.
\end{theorem}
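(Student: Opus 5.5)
The plan is a standard induction on the round index, formalizing the ``view'' of each machine. For machine $j$ and round $t\ge 0$, let $\mathrm{View}_j^{(t)}$ denote the full local state of $j$ at the end of round $t$. This consists of its shard $\cD_j$, its owned-state parameters and initialization, its private randomness, the outcome of all local computation performed so far, and every message it has received. Let $N_t(j)$ denote the set of machines within undirected support-graph distance at most $t$ of $j$. The claim to be proved by induction is the following: there is a (possibly randomized-input, but deterministic given its arguments) function $F_{j,t}$ such that
\[
\mathrm{View}_j^{(t)} = F_{j,t}\bigl((\cD_k,\ \theta_k^{(0)},\ \omega_k)_{k\in N_t(j)}\bigr),
\]
where $\theta_k^{(0)}$ is the initialization and $\omega_k$ the randomness of machine $k$.

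The base case $t=0$ is immediate. Before any communication, $j$ can only compute on its own shard, initialization and randomness, and the number of local Bellman updates, regression calls or optimization steps is irrelevant: an arbitrary composition of functions of $(\cD_j,\theta_j^{(0)},\omega_j)$ is still a function of those inputs.

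For the inductive step, suppose the claim holds for round $t$. In round $t+1$, the message that neighbor $k$ sends to $j$ is, by the communication model, some function of $\mathrm{View}_k^{(t)}$. By the inductive hypothesis this is a function of the inputs indexed by $N_t(k)$. Since $k$ is adjacent to $j$ in the support graph, $N_t(k)\subseteq N_{t+1}(j)$, and likewise $N_t(j)\subseteq N_{t+1}(j)$. The new view $\mathrm{View}_j^{(t+1)}$ is an arbitrary (again unboundedly long) local computation applied to $\mathrm{View}_j^{(t)}$ together with the incoming messages. Composing functions therefore yields a function of the inputs on $N_{t+1}(j)$. The output $\hat V$ restricted to $\cS_j$ is read off from $\mathrm{View}_j^{(R)}$, which gives the first statement.

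For the consequence, I will simply re-examine the proof of Theorem~\ref{thm:lower}. The two instances $\cM_0,\cM_1$ agree on every shard, initialization and state assignment on $N_R(u)$, because they differ only at $v$ with $\dist(u,v)\ge R+1$. The main point to verify carefully is that ``data'' includes everything that could differ, including rewards stored at $v$ and the transition structure. I also need that the randomness can be coupled: fix identical $\omega$ across the two instances. Then $\hat V(x_0)$ coincides in both instances (pathwise for randomized protocols), and the contradiction goes through unchanged, with no assumption on per-round computation.

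The only subtle step, and the one I expect to be the main obstacle, is making precise that the graph topology itself (the neighbor sets that machines know) is common knowledge and not an input that leaks information. This holds because both instances share the same support graph $G_{\cD}$.
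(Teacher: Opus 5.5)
Your proposal is correct and takes the same route as the paper. Both argue by induction on rounds: arbitrary local computation adds no new dependence, and messages from a neighbor $k$ carry only radius-$t$ information about $k$, hence radius-$(t+1)$ information about $j$. Your explicit handling of the consequence (coupling randomness pathwise, and checking that $\cM_0,\cM_1$ agree on $N_R(u)$ because they differ only at $v$) fills in details the paper leaves implicit.
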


\begin{proof}
By induction on rounds. At time $0$, the state of machine $j$ depends only on its own shard, randomness, and initialization — its radius-$0$ neighborhood. Assume at round $R$ each machine's state depends only on its radius-$R$ neighborhood. In round $R+1$: the local-computation phase applies an arbitrary map to current state, introducing no new dependence; the communication phase delivers messages from neighbors $k$, which by hypothesis depend only on $k$'s radius-$R$ neighborhood, contained in $j$'s radius-$(R+1)$ neighborhood. Hence machine $j$'s state after round $R+1$ depends only on its radius-$(R+1)$ neighborhood.
\end{proof}

\begin{corollary}[Hierarchical federated deployment]
\label{cor:fed}
Consider a federated system with rooted-tree communication topology of depth $L$, learner at the root, informative rewards at leaves, per-link bandwidth $B$, and $m$ independent leaf-side reward coordinates. Any uniformly $\eps$-accurate protocol with $\eps<\gamma^L/4$ requires
\[
R = \Omega\!\bigl(\max\{L,\,m/B\}\bigr)
\]
rounds.
\end{corollary}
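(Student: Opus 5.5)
The plan is to reduce Corollary~\ref{cor:fed} to the path construction of Theorem~\ref{thm:bits}, embedded along a single root-to-leaf branch of the tree. Fix a root-to-leaf path $\rho=j_0,j_1,\ldots,j_L$ of length exactly $L$, where $j_0$ is the root (the learner) and $j_L$ is a leaf at depth $L$; such a path exists because the tree has depth $L$. On this path, place the $m$ chains $x^{(q)}_0\mapsto\cdots\mapsto x^{(q)}_L$, with $x^{(q)}_\ell$ stored on $j_\ell$. The rewards $r_b(x^{(q)}_L)=(1-\gamma)b_q$ sit at the leaf $j_L$, so these are the $m$ independent leaf-side reward coordinates. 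On every machine off $\rho$, place inert zero-reward self-loop states. These extra states add no edges to $G_{\cD}$, so every instance $\cM_b$ is consistent with the given tree topology.

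The first step verifies that the tree inherits the two cut properties the path argument uses. For each $r\in\{1,\ldots,L\}$, remove the tree edge $e_r=(j_{r-1},j_r)$. This splits the tree into the subtree rooted at $j_r$, which contains the leaf $j_L$ and every reward that varies with $b$, and a complement that contains the root $j_0$. Because the communication graph is a tree, $e_r$ is the only edge crossing this cut.

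The second step repeats the argument of Theorem~\ref{thm:bits} with these cuts. Since $\eps<\gamma^L/4$, thresholding the root's estimates $\hat V(x^{(q)}_0)$ at $\gamma^L/2$ recovers $b$. Suppose two vectors $b\neq b'$ produced the same transcript on $e_r$. Then the whole root-side component would have identical views, by the same indistinguishability induction as in Theorem~\ref{thm:localsteps}, with the cut edge as the only interface. This is a contradiction, so the transcript on each $e_r$ must take $2^m$ distinct values and hence carry at least $m$ bits. Under per-link bandwidth $B$, that edge can carry at most $BR$ bits, which gives $R\ge m/B$. Separately, the two instances $b=0$ and $b=e_1$ differ only at $j_L$, which is at distance $L$ from the root. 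Theorem~\ref{thm:localsteps} then gives $R\ge L$, because after fewer than $L$ rounds the root's view cannot depend on $j_L$'s data, while $|V^\star_{e_1}(x^{(1)}_0)-V^\star_0(x^{(1)}_0)|=\gamma^L>2\eps$. Combining the two bounds yields $R=\Omega(\max\{L,m/B\})$.

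The main obstacle is the first step's claim that a single edge separates the root side from the reward side, since off-path tree nodes could in principle relay information. The tree structure settles this: every path from the root-side component into the subtree rooted at $j_r$ must use $e_r$, so the transcript across $e_r$ is the sole channel. With that in place, the remainder is a direct reuse of Theorem~\ref{thm:bits}. If one wants the corollary for randomized zero-error protocols as well, Theorem~\ref{thm:bits-rand} transfers in the same way, applied per realization of the randomness.
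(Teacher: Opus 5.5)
Your proposal is correct and follows the same route as the paper: restrict to a root-to-leaf path of length $L$ and rerun the cut and locality argument of Theorem~\ref{thm:bits}. Your explicit observation, that each path edge is a bridge of the tree so off-path subtrees cannot relay information around the cut, is the detail the paper's one-line ``any tree protocol must in particular solve it'' leaves implicit.
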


\begin{proof}
Restrict to a root-to-leaf path of length $L$; Theorem~\ref{thm:bits} applies to that path subproblem, and any tree protocol must in particular solve it.
\end{proof}

\section{Direct Boundary Propagation Upper Bound}
\label{sec:direct}

\subsection{Algorithm}
Each machine $j$ maintains $V_j^{(t)} \in \R^{|\cS_j|}$. To back up $s\in\cS_j$ via transition $(s,a,r,s')$, machine $j$ needs $V(s')$; if $s'\in\cS_k$ with $k\neq j$, this value must have been received from $k$.

\begin{algorithm}[t]
\caption{Synchronous Direct Boundary Propagation (SDBP)}
\label{alg:sdvi}
\begin{algorithmic}[1]
\REQUIRE Data shards $\{\cD_j\}_{j=1}^M$, initial values $V_j^{(0)}$.
\FOR{$t=0,1,2,\ldots,T-1$}
    \STATE \textbf{Communication step:} Each machine $j$ sends to each neighbor $k$ the values $\{V_j^{(t)}(s) : s \in \partial_{k \leftarrow j}\}$.
    \STATE \textbf{Local Bellman step:} Each machine $j$ updates states in $\cS_j$ whose one-step dependencies are now available, using $\widetilde{\cT}_j$.
\ENDFOR
\RETURN $V^{(T)}$ assembled into a global value vector.
\end{algorithmic}
\end{algorithm}

\begin{assumption}[Approximate local Bellman consistency]
\label{ass:contract}
For every machine $j$ and every bounded $V$,
\[
\|\widetilde{\cT}_j V - \cT_j V\|_\infty \le \delta,
\]
where $\cT_j$ is the exact Bellman update on $\cS_j$ using true successor values. The exact $\cT$ is a $\gamma$-contraction in $\|\cdot\|_\infty$.
\end{assumption}

\subsection{Truncated Bellman dependence}
For $T \ge 0$, the $T$-hop truncated optimal value is
\[
V^{\star,(T)}(s)
:= \sup_\pi \E_\pi\!\left[\sum_{t=0}^{T-1} \gamma^t r_t \mid s_0=s\right] = (\cT^T 0)(s).
\]

\begin{lemma}[Truncation error]
\label{lem:trunc}
For rewards in $[0,1]$ and every $T\ge 0$,
$\|V^\star - V^{\star,(T)}\|_\infty \le \gamma^T/(1-\gamma)$.
\end{lemma}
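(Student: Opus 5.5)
The plan is to prove the bound through the contraction property of $\cT$ in $\|\cdot\|_\infty$ (Assumption~\ref{ass:contract}), using the identity $V^{\star,(T)}=\cT^T 0$ from the definition above. Since $V^\star$ is the fixed point of $\cT$, we have $V^\star=\cT^T V^\star$. Hence
\[
\|V^\star - V^{\star,(T)}\|_\infty = \|\cT^T V^\star - \cT^T 0\|_\infty \le \gamma^T \|V^\star - 0\|_\infty = \gamma^T\|V^\star\|_\infty .
\]
The inequality comes from applying the $\gamma$-contraction $T$ times, by induction on $T$, with $T=0$ trivial.

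The remaining step is to bound $\|V^\star\|_\infty$. Because rewards lie in $[0,1]$, every discounted return lies in $[0,\sum_{t\ge 0}\gamma^t]=[0,1/(1-\gamma)]$. Two routes give $0\le V^\star\le 1/(1-\gamma)$ pointwise. The first uses the sup-over-policies representation of $V^\star$. The second is purely operator-theoretic: $\cT$ maps the order interval $[0,1/(1-\gamma)]$ into itself, since $0\le r+\gamma\cdot 0$ and $r+\gamma/(1-\gamma)\le 1/(1-\gamma)$; the interval is closed, so the fixed point lies in it. Combining the two steps gives $\gamma^T/(1-\gamma)$.

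As a cross-check, a direct argument also works. For any policy, the return splits into its first $T$ terms plus a tail $\sum_{t\ge T}\gamma^t r_t\in[0,\gamma^T/(1-\gamma)]$. Taking the supremum over policies then gives $V^{\star,(T)}\le V^\star\le V^{\star,(T)}+\gamma^T/(1-\gamma)$. This uses the standard identification of $V^\star$ with the optimal value over (history-dependent) policies. It also uses that the supremum of sums is at most the sum of suprema.

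The only real subtlety is justifying $V^{\star,(T)}=\cT^T0$, i.e., finite-horizon dynamic programming. Since the paper states it as the definition, I will take it as given and use the contraction route, which avoids policy-class measurability issues entirely.
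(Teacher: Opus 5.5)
Your proof is correct, but your main argument is not the paper's. The paper's proof is essentially your ``cross-check'': for every policy and start state the tail $\sum_{t\ge T}\gamma^t r_t$ lies in $[0,\gamma^T/(1-\gamma)]$, and one then takes expectations and the supremum over policies.

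That argument works at the level of trajectories and is a one-liner. Because $r\ge 0$, it also gives the sandwich $V^{\star,(T)}\le V^\star\le V^{\star,(T)}+\gamma^T/(1-\gamma)$ directly. Its price is that it relies on the policy characterization of the fixed point $V^\star$, together with the identity $\sup_\pi\E_\pi[\cdots]=\cT^T 0$ that is built into the definition of $V^{\star,(T)}$.

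Your contraction route never mentions policies. It combines $\|\cT^T V^\star-\cT^T 0\|_\infty\le\gamma^T\|V^\star\|_\infty$ with the a priori bound $0\le V^\star\le 1/(1-\gamma)$, obtained from invariance of the order interval under $\cT$. It uses only the fixed-point property, the $\gamma$-contraction, and $V^{\star,(T)}=\cT^T 0$. These are exactly the objects the rest of the paper works with, for example in the recursion of Lemma~\ref{lem:lightcone}. It also proves the more general value-iteration estimate $\|\cT^T V_0-V^\star\|_\infty\le\gamma^T\|V_0-V^\star\|_\infty$, of which the lemma is the case $V_0=0$.

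The one step to phrase more carefully is the a priori bound. Invariance plus closedness suffices because the iterates $\cT^n 0$ stay in the interval and converge to $V^\star$ by the Banach fixed-point theorem; closedness alone does not place the fixed point in the set.
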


\begin{proof}
For any $\pi,s$, $\sum_{t=T}^\infty \gamma^t r_t \le \sum_{t=T}^\infty \gamma^t = \gamma^T/(1-\gamma)$. Taking expectations and supremum over policies gives the claim.
\end{proof}

\begin{lemma}[Light-cone correctness of direct propagation]
\label{lem:lightcone}
After $T$ rounds of Algorithm~\ref{alg:sdvi}, the value at any state $s$ depends only on data within graph distance $T$ of $s$'s machine. In the exact case $\delta=0$, $V^{(T)} = V^{\star,(T)}$. In the approximate case, $\|V^{(T)} - V^{\star,(T)}\|_\infty \le \delta/(1-\gamma)$.
\end{lemma}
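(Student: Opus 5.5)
The proof has three parts, one for each claim in Lemma~\ref{lem:lightcone}, and each goes by induction on the round index $t$. We initialize with $V^{(0)}\equiv 0$, which the identity $V^{\star,(T)}=\cT^T 0$ implicitly requires. We also read Algorithm~\ref{alg:sdvi} as a Jacobi-style iteration: in round $t$, every machine $j$ receives the boundary values $V_k^{(t)}(s')$ for $s'\in\partial_{j\leftarrow k}$, and then sets $V_j^{(t+1)}=\widetilde{\cT}_j V^{(t)}$ on $\cS_j$. In the exact case this means $V^{(t+1)}=\cT V^{(t)}$ globally, since $\cT_j$ uses only one-step successor values. Those successors are either local or lie in $\partial_{j\leftarrow k}$ for some neighbor $k$, so they are exactly the values delivered in the communication step.

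\emph{Light-cone claim.} This is the same induction as Theorem~\ref{thm:localsteps}, specialized to this algorithm. $V_j^{(0)}$ depends only on initialization at $j$. Suppose $V_k^{(t)}$ depends only on data within distance $t$ of $k$, for every $k$. Then $V_j^{(t+1)}$ depends on $\cD_j$ and on $V_k^{(t)}$ for $k=j$ or $k$ adjacent to $j$. The radius-$t$ neighborhood of each such $k$ lies inside the radius-$(t+1)$ neighborhood of $j$, which closes the induction.

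\emph{Exact case.} From $V^{(t+1)}=\cT V^{(t)}$ and $V^{(0)}=0$ we get $V^{(T)}=\cT^T 0 = V^{\star,(T)}$ directly. The one point to check is that no state is left un-updated for lack of inputs. This holds because under the Jacobi reading all boundary values from step $t$ are available at step $t$, so every state in $\cS_j$ is backed up in every round.

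\emph{Approximate case.} Let $e_t:=\|V^{(t)}-\cT^t 0\|_\infty$, so $e_0=0$. We split
\[
\|V^{(t+1)}-\cT^{t+1}0\|_\infty \le \|\widetilde{\cT}V^{(t)}-\cT V^{(t)}\|_\infty + \|\cT V^{(t)}-\cT(\cT^t 0)\|_\infty .
\]
The first term is at most $\delta$ by Assumption~\ref{ass:contract}, applied machine by machine, since $\widetilde{\cT}$ acts blockwise through the $\widetilde{\cT}_j$. The second term is at most $\gamma e_t$ by the $\gamma$-contraction of $\cT$. Hence $e_{t+1}\le \delta+\gamma e_t$, which unrolls to $e_T\le \delta\sum_{i<T}\gamma^i\le \delta/(1-\gamma)$.

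The main obstacle is this approximate step. Assumption~\ref{ass:contract} has to be read so that $\widetilde{\cT}_j$ applied to the globally assembled $V^{(t)}$ is exactly what machine $j$ computes from its local values together with the received boundary values. Equivalently, $\widetilde{\cT}_j$ must depend on $V$ only through $\cS_j\cup\bigcup_k\partial_{j\leftarrow k}$. We will state this interpretation explicitly. Without it, $\widetilde{\cT}_j$ could depend on values the machine has not yet received, and neither the light-cone claim nor the error recursion would apply to the algorithm as actually executed.
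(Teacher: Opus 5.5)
Your proposal is correct and takes essentially the same route as the paper. It uses the same induction for the light-cone and exact claims, and the same triangle-inequality split into a $\delta$ term (Assumption~\ref{ass:contract}) plus a $\gamma e_t$ term (Bellman contraction), unrolled from $e_0=0$. The interpretive points you make explicit (zero initialization, Jacobi-style updates, and $\widetilde{\cT}_j$ depending only on local and received boundary values) are the ones the paper's proof relies on implicitly.
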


\begin{proof}
\emph{Light-cone claim.} By induction. At $T=0$, $V^{(0)}=0$ depends only on initialization (radius-$0$). Assume at round $T$ each value depends only on radius-$T$ data. In round $T+1$, the Bellman backup at state $s$ on machine $j$ uses values at successor states $s'$, which arrived from neighbors of $j$ and depend (by IH) only on data within radius $T$ of those neighbors, hence within radius $T+1$ of $j$.

\emph{Exact case.} The same induction shows $V^{(T)} = \cT^T 0 = V^{\star,(T)}$ when $\delta=0$.

\emph{Approximate case.} We prove the recursion
\begin{equation}
\label{eq:approx-rec}
\|V^{(T+1)} - V^{\star,(T+1)}\|_\infty \le \delta + \gamma\|V^{(T)} - V^{\star,(T)}\|_\infty
\end{equation}
separately, then unroll. By Assumption~\ref{ass:contract}, for any state $s$ on machine $j$,
\[
|V^{(T+1)}(s) - V^{\star,(T+1)}(s)|
\le |\widetilde{\cT}_j V^{(T)}(s) - \cT_j V^{(T)}(s)| + |\cT_j V^{(T)}(s) - \cT V^{\star,(T)}(s)|.
\]
The first term is at most $\delta$. The second equals $|\cT V^{(T)}(s) - \cT V^{\star,(T)}(s)| \le \gamma\|V^{(T)} - V^{\star,(T)}\|_\infty$ by Bellman contraction. Taking sup over $s$ proves \eqref{eq:approx-rec}.

Now unroll: with $V^{(0)}=V^{\star,(0)}=0$, we have $\|V^{(0)} - V^{\star,(0)}\|_\infty=0$, and induction on \eqref{eq:approx-rec} gives
\[
\|V^{(T)} - V^{\star,(T)}\|_\infty \le \sum_{t=0}^{T-1} \gamma^t \delta \le \frac{\delta}{1-\gamma}.
\]
\end{proof}

\begin{theorem}[Direct-propagation upper bound]
\label{thm:direct}
Under Assumption~\ref{ass:contract}, for every $T\ge 0$,
\[
\|V^{(T)} - V^\star\|_\infty
\le
\frac{\gamma^T}{1-\gamma} + \frac{\delta}{1-\gamma}.
\]
To achieve $\|V^{(T)} - V^\star\|_\infty \le \eps + \delta/(1-\gamma)$, it suffices to take
\[
T = O\!\left(\frac{1}{1-\gamma}\log\frac{1}{\eps}\right).
\]
\end{theorem}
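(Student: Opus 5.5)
The plan is to combine the two lemmas already established via the triangle inequality. I will insert the $T$-hop truncated optimum $V^{\star,(T)}=\cT^T 0$ as an intermediate and bound
\[
\|V^{(T)}-V^\star\|_\infty \le \|V^{(T)}-V^{\star,(T)}\|_\infty + \|V^{\star,(T)}-V^\star\|_\infty .
\]
The first term is at most $\delta/(1-\gamma)$ by Lemma~\ref{lem:lightcone}, approximate case. Here I will use the initialization $V^{(0)}=0$, which is the initialization under which that lemma was proved. The second term is at most $\gamma^T/(1-\gamma)$ by Lemma~\ref{lem:trunc}, using rewards in $[0,1]$. Adding the two bounds gives the displayed inequality for every $T\ge 0$. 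The case $T=0$ is trivially consistent, since $\|V^\star\|_\infty\le 1/(1-\gamma)$.

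For the round-count statement, it suffices to make $\gamma^T/(1-\gamma)\le\eps$, that is,
\[
T\ge \frac{\log\bigl(1/(\eps(1-\gamma))\bigr)}{\log(1/\gamma)}.
\]
I will use the elementary inequality $\log(1/\gamma)\ge 1-\gamma$, which follows from $\log x\le x-1$ applied at $x=\gamma$. This shows that
\[
T=\left\lceil \frac{1}{1-\gamma}\log\frac{1}{\eps(1-\gamma)}\right\rceil
\]
suffices. This is $O\!\left(\frac{1}{1-\gamma}\log\frac{1}{\eps}\right)$ once the $\log\frac{1}{1-\gamma}$ term is absorbed. I will state that absorption explicitly: either treat $\gamma$ as fixed relative to $\eps$, or note that it is dominated whenever $\eps\le 1-\gamma$. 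Otherwise it can be written as $O\!\left(\frac{1}{1-\gamma}\log\frac{1}{\eps(1-\gamma)}\right)$.

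The only delicate point is the constant absorption just described. Everything else is immediate from the earlier lemmas, so I expect no real obstacle. One further remark I would make: since $\log(1/\gamma)$ may be much larger than $1-\gamma$ when $\gamma$ is small, the sharper count $T=\lceil\log(1/(\eps(1-\gamma)))/\log(1/\gamma)\rceil$ also holds. This sharper count matches $L_\eps$ from Theorem~\ref{thm:lower} up to constants and the additive $\log\frac{1}{1-\gamma}$ term.
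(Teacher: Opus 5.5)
Your proposal is correct and follows the paper's proof exactly: you use the triangle inequality through $V^{\star,(T)}=\cT^T 0$, Lemma~\ref{lem:lightcone} for the $\delta/(1-\gamma)$ term, Lemma~\ref{lem:trunc} for the $\gamma^T/(1-\gamma)$ term, and then solve $\gamma^T\le(1-\gamma)\eps$. Your explicit use of $\log(1/\gamma)\ge 1-\gamma$ and your remark on absorbing the $\log\frac{1}{1-\gamma}$ term are more careful than the paper, which writes only $\log(1/\gamma)\asymp 1-\gamma$ and silently drops that term.
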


\begin{proof}
By Lemma~\ref{lem:lightcone},
$\|V^{(T)} - V^{\star,(T)}\|_\infty \le \delta/(1-\gamma)$, and by Lemma~\ref{lem:trunc}, $\|V^{\star,(T)} - V^\star\|_\infty \le \gamma^T/(1-\gamma)$. Triangle inequality gives the displayed bound.

To make $\gamma^T/(1-\gamma) \le \eps$, take $T \ge \log(1/((1-\gamma)\eps))/\log(1/\gamma) = O(\log(1/\eps)/(1-\gamma))$, using $\log(1/\gamma) \asymp 1-\gamma$.
\end{proof}

\begin{remark}[Interpretation]
After $T$ rounds, direct propagation incorporates exactly the Bellman dependencies inside a radius-$T$ neighborhood. Contributions beyond radius $T$ are exponentially suppressed by discounting, so the effective horizon is $O(\log(1/\eps)/(1-\gamma))$ rather than the full diameter.
\end{remark}

\begin{corollary}[Near-matching picture on large-diameter graphs]
\label{cor:tight}
If $\diam(G_{\cD}) \ge c\log(1/\eps)/(1-\gamma)$ for an absolute constant $c$, then Theorems~\ref{thm:lower} and~\ref{thm:direct} together give
\[
R^\star = \Theta\!\left(\frac{\log(1/\eps)}{1-\gamma}\right)
\]
for direct-propagation algorithms, up to constants.
\end{corollary}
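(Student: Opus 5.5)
The plan is to sandwich $R^\star$, the minimal number of rounds a direct-propagation algorithm needs to guarantee $\|V^{(T)}-V^\star\|_\infty\le\eps$ (up to the $\delta/(1-\gamma)$ floor), between the two earlier theorems. Both bounds will be written in the common scale $\log(1/\eps)/(1-\gamma)$. The tool for this is the elementary comparison $1-\gamma\le\log(1/\gamma)\le(1-\gamma)/\gamma$. It gives $\log(1/\gamma)\asymp 1-\gamma$ uniformly once $\gamma$ is bounded away from $0$, say $\gamma\ge 1/2$; I will state this restriction explicitly, since it is the only regime in which the $\Theta$ claim is meaningful.

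For the upper bound, I take $\delta=0$ (or absorb the floor) and apply Theorem~\ref{thm:direct} with $T=\lceil \log(1/((1-\gamma)\eps))/\log(1/\gamma)\rceil$. Using $\log(1/\gamma)\ge 1-\gamma$, this gives $T\le \bigl(\log(1/\eps)+\log(1/(1-\gamma))\bigr)/(1-\gamma)+1$. I will then argue that $\log(1/(1-\gamma))$ is dominated by $\log(1/\eps)$. The honest options are either to assume $\eps\le 1-\gamma$, which is the standard meaningful regime since $\|V^\star\|_\infty\le 1/(1-\gamma)$, or to report the bound as $O(\log(1/((1-\gamma)\eps))/(1-\gamma))$. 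I will adopt the first convention to match the statement.

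For the lower bound, I invoke Theorem~\ref{thm:lower}. It applies when $\diam(G_\cD)\ge L_\eps$, so I need $c\log(1/\eps)/(1-\gamma)\ge L_\eps$. Since $L_\eps\le \log(1/(2\eps))/\log(1/\gamma)\le \log(1/\eps)/(1-\gamma)$, the choice $c=1$ suffices. The theorem then gives $R\ge L_\eps$ for every protocol, and in particular for direct propagation. Theorem~\ref{thm:localsteps} confirms that extra local computation does not circumvent this. Finally, $L_\eps\ge \log(1/(2\eps))/\log(1/\gamma)-1\ge \gamma\log(1/(2\eps))/(1-\gamma)-1$, which is $\Omega(\log(1/\eps)/(1-\gamma))$ for $\gamma\ge1/2$ and $\eps$ below a small constant, say $\eps\le 1/8$ so that $\log(1/(2\eps))\ge\tfrac12\log(1/\eps)$ and the $-1$ is absorbed.

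The main obstacle is the gap between the two conversions. The lower bound naturally lives at scale $1/\log(1/\gamma)$ and the upper bound at scale $1/(1-\gamma)$, and these agree only up to a factor $1/\gamma$. In addition, the upper bound carries the extra $\log(1/(1-\gamma))$ term. My plan is to state the parameter regime explicitly ($\gamma\ge1/2$, $\eps\le\min\{1/8,1-\gamma\}$), under which both discrepancies are absolute constants. Combining the two bounds in that regime yields $R^\star=\Theta(\log(1/\eps)/(1-\gamma))$.
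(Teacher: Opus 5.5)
Your proposal is correct and follows the route the paper intends. The paper gives no separate proof; it simply pairs $R\ge L_\eps$ from Theorem~\ref{thm:lower} with the $O(\log(1/\eps)/(1-\gamma))$ bound of Theorem~\ref{thm:direct} via $\log(1/\gamma)\asymp 1-\gamma$. Your explicit regime ($\gamma\ge 1/2$, $\eps\le\min\{1/8,\,1-\gamma\}$, $c=1$) makes rigorous what the paper leaves implicit, including the $\log(1/(1-\gamma))$ term that the paper's proof of Theorem~\ref{thm:direct} silently absorbs.
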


\section{Asynchronous Extension}
\label{sec:async}

We now extend the locality framework to a bounded-delay asynchronous model in the style of \citet{bertsekas1989parallel}, Chapter 7. The story is the same: locality remains the fundamental obstruction, scaled now by the maximum delay.

\subsection{Model}
Time proceeds in discrete \emph{wall-clock rounds} $t=0,1,2,\ldots$. Each machine $j$ holds a local value $V_j$ and a buffer of values received from neighbors. In each wall-clock round, each machine may (1) perform arbitrary local computation, (2) send messages to neighbors, and (3) consume one or more buffered messages from any subset of neighbors.

A communication is in flight between the round it is sent and the round it is consumed. We assume bounded delay:

\begin{assumption}[Bounded message delay]
\label{ass:async}
There exists a known $D\ge 1$ such that any message sent in round $t$ is consumed by round $t+D-1$ at the latest. Each machine performs at least one local Bellman update every $D$ rounds.
\end{assumption}

This is the standard partially-asynchronous model from \citep{bertsekas1989parallel}. Synchronous execution corresponds to $D=1$.

\subsection{Asynchronous lower bound}

\begin{theorem}[Asynchronous locality lower bound]
\label{thm:async-lower}
Fix $\gamma\in(0,1)$, $\eps\in(0,1/2)$, and a bounded-delay model with maximum delay $D\ge 1$. If $G_\cD$ contains two machines at graph distance at least $L_\eps$, then no asynchronous protocol consistent with the bounded-delay model can guarantee $\|\hat V - V^\star\|_\infty\le \eps$ in fewer than $D L_\eps$ wall-clock rounds in the worst case.
\end{theorem}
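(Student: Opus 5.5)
We reuse the two-instance indistinguishability construction from the proof of Theorem~\ref{thm:lower} and combine it with an adversarial delay schedule. Assume toward contradiction that some asynchronous protocol is uniformly $\eps$-accurate within $T< D L_\eps$ wall-clock rounds under every delay pattern allowed by Assumption~\ref{ass:async}. Choose machines $u,v$ with $\dist(u,v)\ge L_\eps$ and a shortest path $u=j_0,\dots,j_{L}=v$, where $L=L_\eps$. Place the chain MDPs $\cM_0,\cM_1$ along this path exactly as before, but with the path shortened to length $L$. The two instances then differ only in the reward at the state on $v$, and $|V_1^\star(x_0)-V_0^\star(x_0)|=\gamma^{L}$.

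This gap only fails to exceed $2\eps$ at $L_\eps$ itself, so we need to handle the margin. One option is to replace the gap by the strict inequality $\gamma^{L_\eps}>2\eps$, which holds by the definition of $L_\eps$. This is exactly why the statement uses distance $L_\eps$ rather than $L_\eps+1$. So no number is within $\eps$ of both $0$ and $\gamma^{L_\eps}$.

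\textbf{Delay schedule.} The adversary makes every message on every edge take the maximal delay. A message sent in round $t$ is delivered, and consumed, exactly at round $t+D-1$, so it influences the receiver's state from round $t+D$ onward. Local updates are allowed as often as the protocol likes; this is consistent with the assumption, which only lower-bounds update frequency. We then prove a delayed light-cone lemma by induction, mirroring Theorem~\ref{thm:localsteps}:
\begin{quote}
at the end of wall-clock round $t$, the state of machine $j$ depends only on data, randomness, and initialization within graph distance $\lfloor t/D\rfloor$ of $j$.
\end{quote}
The induction step has two parts. Local computation adds no dependence. A message consumed by $j$ by round $t$ was sent at some round $\le t-D$ by a neighbor $k$, whose state then depended on radius $\lfloor (t-D)/D\rfloor=\lfloor t/D\rfloor-1$ around $k$. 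That radius is contained in radius $\lfloor t/D\rfloor$ around $j$.

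\textbf{Conclusion.} Since $T<DL$ gives $\lfloor T/D\rfloor\le L-1$, the radius-$(L-1)$ neighborhood of $u$ excludes $v$. Under this fixed delay schedule, the view of $u$ is therefore identical on $\cM_0$ and $\cM_1$. The protocol must output the same $\hat V(x_0)$ on both instances, which contradicts the gap $\gamma^{L_\eps}>2\eps$.

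\textbf{Main obstacle.} The hard part is pinning down the timing convention so that the induction yields exactly $\lfloor t/D\rfloor$ rather than $\lfloor t/(D-1)\rfloor$ or an off-by-one. The phrase ``consumed by round $t+D-1$'' must be interpreted so that each hop costs exactly $D$ rounds. With $D=1$ this should reduce to the synchronous model, where a message sent in round $t$ is usable at round $t+1$. I would first fix the within-round ordering as local computation, then sending, then consuming. Under that ordering, a message consumed at round $t+D-1$ is first reflected in messages sent at round $t+D$, so the effective hop cost is $D$. I would then check the $D=1$ case against Theorem~\ref{thm:lower}.

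A secondary point is that ``in the worst case'' quantifies over both instances and delay patterns. So it suffices to exhibit this one adversarial schedule, together with a worst-case instance pair.
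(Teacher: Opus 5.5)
Your proposal is correct and rests on the same three ingredients as the paper's proof: a maximal-delay adversary, the fact that information then advances one hop per $D$ wall-clock rounds, and the two-chain indistinguishability of Theorem~\ref{thm:lower}. The difference is in how these are packaged.

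The paper views the delayed execution as a synchronous protocol running ``once per batch''. It asserts, with only a brief justification, that after $R$ rounds each machine's information state is that of a synchronous protocol after $\lfloor R/D\rfloor$ rounds. It then invokes Theorem~\ref{thm:lower} as a black box. You instead prove the delayed light cone directly by induction, mirroring Theorem~\ref{thm:localsteps}, and rerun the indistinguishability on a chain of length exactly $L_\eps$. Your route supplies the step the paper leaves implicit, and it avoids having to exhibit a legitimate simulating synchronous protocol. The paper's route is shorter and reuses Theorem~\ref{thm:lower} verbatim.

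Two small points. First, there is no margin issue: $\gamma^{L_\eps}>2\eps$ holds by the definition of $L_\eps$, so the length-$L_\eps$ chain already gives the needed separation. Second, the paper indexes rounds from $0$, so your invariant is right only if ``at the end of round $t$'' means ``after $t$ complete rounds''. Read literally, it fails already for $D=1$, where a machine holds radius-$1$ information at the end of round $0$. Under the corrected reading, your induction step holds exactly as written: messages consumed within the first $t$ rounds were sent in rounds $\le t-D$, and they carry the sender's state after at most $t-D$ rounds. Your conclusion from $\lfloor T/D\rfloor\le L-1$ then goes through unchanged.
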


\begin{proof}
The adversarial schedule. The adversary maximizes delay: every message is held in the buffer for exactly $D-1$ rounds and consumed in round $t+D-1$. Under this schedule, in each batch of $D$ wall-clock rounds, each machine receives at most one round's worth of messages from each neighbor and performs at least one local Bellman update.

Now consider a hypothetical synchronous protocol that runs once per batch: that is, the synchronous protocol's round $r$ corresponds to the asynchronous wall-clock rounds $rD, rD+1, \ldots, (r+1)D-1$. Under the worst-case delay schedule, after $R$ asynchronous wall-clock rounds, the information state of each machine corresponds to that of a synchronous protocol after at most $\lfloor R/D \rfloor$ synchronous rounds.

Apply Theorem~\ref{thm:lower} to the implied synchronous protocol: $\eps$-accuracy requires $\lfloor R/D\rfloor \ge L_\eps$, hence $R \ge D L_\eps$.
\end{proof}

\subsection{Asynchronous upper bound}

The natural asynchronous version of SDBP is identical to Algorithm~\ref{alg:sdvi}, except that machines send and consume boundary values opportunistically rather than in lockstep.

\begin{algorithm}[t]
\caption{Asynchronous Direct Boundary Propagation (A-SDBP)}
\label{alg:async}
\begin{algorithmic}[1]
\REQUIRE Data shards $\{\cD_j\}_{j=1}^M$, initial values $V_j^{(0)}$, max delay $D$.
\FOR{wall-clock round $t=0,1,\ldots,T-1$}
    \STATE Each machine $j$ may, in any order: (a) consume any subset of buffered boundary values from neighbors, (b) apply $\widetilde{\cT}_j$ to update states whose dependencies are available, (c) send updated boundary values to neighbors.
    \STATE Constraint: every message is delivered within $D$ rounds; every machine performs at least one Bellman update every $D$ rounds.
\ENDFOR
\RETURN $V^{(T)}$.
\end{algorithmic}
\end{algorithm}

We partition wall-clock time into batches of size $D$, with batch $b$ consisting of rounds $bD, bD+1, \ldots, (b+1)D-1$, and write
\[
V^{[b]} := V^{(bD)}
\]
for the iterate at the \emph{start} of batch $b$ (equivalently, the iterate after $b$ complete batches have elapsed). In particular $V^{[0]} = V^{(0)} = 0$.

\begin{lemma}[One batch advances the dependency radius by one]
\label{lem:async-batch}
Under Assumption~\ref{ass:async}, fix any machine $j$ and any state $s\in\cS_j$. By the end of batch $b$ — that is, by wall-clock round $(b+1)D-1$, equivalently in $V^{[b+1]}$ — every Bellman backup at $s$ whose one-step dependencies lie at neighbors of $j$ that completed updates during batch $b$ has been performed. Consequently, the value at $s$ in $V^{[b+1]}$ depends only on data within graph radius $b+1$ of $j$.
\end{lemma}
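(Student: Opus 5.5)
The proof is an induction on the batch index $b$, patterned on the light-cone argument of Lemma~\ref{lem:lightcone} and the radius induction of Theorem~\ref{thm:localsteps}. The inductive hypothesis is the statement of the lemma itself: for every machine $k$ and every state $s'\in\cS_k$, the value of $s'$ in $V^{[b]}$ depends only on data within graph radius $b$ of $k$. The base case $b=0$ holds because $V^{[0]}=V^{(0)}=0$ depends only on initialization, which is radius $0$.

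For the inductive step, fix a machine $j$ and a state $s\in\cS_j$, and look at batch $b$, i.e.\ wall-clock rounds $bD,\ldots,(b+1)D-1$. I would establish two facts from Assumption~\ref{ass:async}, as follows.

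\emph{Completion of the backup.} Each neighbor $k$ performs at least one local Bellman update in every window of $D$ rounds. Each sent boundary message is consumed within $D-1$ rounds. Together these guarantee that a boundary value from $k$ produced in the batch is available to $j$, and that $j$ performs a backup at $s$ using it before round $(b+1)D-1$. This gives the first claim of the lemma: backups whose dependencies come from neighbors that updated during batch $b$ are carried out by the end of the batch.

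\emph{Radius of dependence.} Any value $j$ consumes during batch $b$ was sent by a neighbor $k$. I would aim to show it is a function of $k$'s state at or after the start of batch $b$ that has not yet incorporated anything from beyond radius $b+1$. By the inductive hypothesis, $k$'s iterate $V^{[b]}$ depends on radius $b$ around $k$, which is contained in radius $b+1$ around $j$. Applying $\widetilde{\cT}_j$ introduces no new dependence beyond $j$'s own shard. So the value at $s$ in $V^{[b+1]}$ depends only on radius-$(b+1)$ data.

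The main obstacle is intermediate messages within a batch. Machine $k$ may itself update and forward several times inside batch $b$, so a message reaching $j$ mid-batch could carry information from radius $b+1$ around $k$, which is radius $b+2$ around $j$. A naive induction therefore overcounts the radius. I would handle this by formulating the induction in terms of causal chains rather than batches. Each hop of information along an edge takes one wall-clock round of send-then-consume at minimum, so after $t$ rounds the dependence radius is at most $t$. The batch statement then needs the batching to be aligned so that at most one hop per batch is credited. If that alignment cannot be forced, I would weaken the statement to an upper-bound radius of $\min\{t,\cdot\}$ and keep only the property actually needed downstream: within $b$ batches, the truncation $V^{\star,(b)}$ has been realized. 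Light-cone containment (radius at most $b+1$ rather than exactly) is what the subsequent upper bound uses. The approximate-error recursion of \eqref{eq:approx-rec} then carries over batch by batch.
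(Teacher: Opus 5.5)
Your skeleton is exactly the paper's proof: induction on $b$, base case $V^{[0]}=0$, and a neighbor's radius-$b$ cone sitting inside $j$'s radius-$(b+1)$ cone. The paper gets through the inductive step only by asserting that every boundary value $j$ consumes during batch $b$ was sent by $k$ no later than the end of batch $b-1$. Nothing in Assumption~\ref{ass:async} or Algorithm~\ref{alg:async} enforces this, so the obstacle you raise is real. No re-alignment of batches removes it, because the schedule is not under the algorithm's control. The assumption bounds delays and update gaps only from above. So for $D\ge 2$, the schedule in which every message is consumed in the round after it is sent, and every machine updates in every round, is admissible. Under it A-SDBP just runs SDBP at one synchronous round per wall-clock round, and with $\delta=0$ Lemma~\ref{lem:lightcone} gives $V^{[b+1]}=\cT^{(b+1)D}0$. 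Take the path instance of Theorem~\ref{thm:lower} with the rewarded state at distance $(b+1)D-1$ from $j$. The value at $j$'s state is then $(1-\gamma)\gamma^{(b+1)D-1}>0$, versus $0$ without that reward, although for $b\ge 1$ this reward lies outside radius $b+1$. So the ``Consequently'' clause holds only for the maximal-delay schedule of Theorem~\ref{thm:async-lower}. Neither the paper's argument nor any repair of yours can prove it for A-SDBP as stated.

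Your proposal also has defects of its own. The ``completion'' step miscounts time. If $k$'s only update in batch $b$ happens in round $(b+1)D-1$, its message may be consumed only in round $(b+2)D-2$. Also, $j$'s guaranteed update in batch $b$ may precede every consumption. So a guaranteed hop of progress costs about $2D$ rounds, not $D$; the paper's text, which uses messages from batch $b-1$, likewise never checks that $j$ updates after consuming. Your fallback then combines two properties, neither of which holds. Containment at radius $b+1$ is what your own objection refutes, and ``$V^{\star,(b)}$ is realized within $b$ batches'' fails by the timing problem just described. Nor does \eqref{eq:approx-rec} relative to $V^{\star,(b)}$ transfer. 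Under the fast schedule $V^{[b]}=\cT^{bD}0$, which generally differs from $\cT^{b}0=V^{\star,(b)}$, whereas with $\delta=0$ the unrolled recursion would force equality.

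The missing idea is to drop truncated values and light cones and compare with $V^\star$ directly, as in the partially asynchronous analysis of \citet{bertsekas1989parallel}. Suppose every value that can still enter a backup is within $e\ge\delta/(1-\gamma)$ of $V^\star$: a machine's own values, its stored copies of neighbors' boundary values, and messages in flight. Then this persists. Moreover, once every machine has updated and a boundary message sent after that update has been consumed---within about $2D$ rounds---all such values are within $\delta+\gamma e$. Iterating from $V^{(0)}=0$ gives $\|V^{(T)}-V^\star\|_\infty\le\gamma^{\lfloor T/(2D)\rfloor}/(1-\gamma)+\delta/(1-\gamma)$, up to the exact window length. That still delivers the $O(D\log(1/\eps)/(1-\gamma))$ rate of Theorem~\ref{thm:async-upper}.
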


\begin{proof}
By induction on $b$. For $b=0$, $V^{[0]}=0$ trivially depends only on radius-$0$ data (the initialization). Assume $V^{[b]}$ depends only on radius-$b$ data for every machine.

Consider batch $b$. By Assumption~\ref{ass:async}, every message sent during batch $b-1$ is consumed by the end of batch $b$ (delay at most $D-1 < D$ wall-clock rounds, so any message sent in batch $b-1$ at the latest round $(b)D-1$ is delivered by round $(b+1)D-1-1$, well within batch $b$). And each machine performs at least one local Bellman update during batch $b$.

Take any state $s\in\cS_j$. Its Bellman backup uses successor values $V(s')$. If $s'\in\cS_j$, those values are already locally available with the inductive radius-$b$ property. If $s'\in\cS_k$ for some neighbor $k$, the relevant boundary value was sent by $k$ no later than the end of batch $b-1$ (since by IH, $V^{[b]}$ on machine $k$ already encodes radius-$b$ information about $k$'s region), and is received and consumed by $j$ during batch $b$. Composing: the value at $s$ in $V^{[b+1]}$ depends on the radius-$b$ data at neighbors of $j$, hence on radius-$(b+1)$ data at $j$.
\end{proof}

\begin{theorem}[Asynchronous direct-propagation upper bound]
\label{thm:async-upper}
Under Assumptions~\ref{ass:contract} and~\ref{ass:async}, A-SDBP satisfies, for every $T \ge 0$,
\[
\|V^{(T)} - V^\star\|_\infty \le \frac{\gamma^{\lfloor T/D\rfloor}}{1-\gamma} + \frac{\delta}{1-\gamma}.
\]
To achieve error $\le \eps + \delta/(1-\gamma)$, it suffices to take
\[
T = O\!\left(\frac{D}{1-\gamma}\log\frac{1}{\eps}\right).
\]
\end{theorem}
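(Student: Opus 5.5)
The plan mirrors the proof of Theorem~\ref{thm:direct}, with synchronous rounds replaced by batches of $D$ wall-clock rounds. We show that the batch iterates $V^{[b]}=V^{(bD)}$ track the truncated values $V^{\star,(b)}=\cT^b 0$ up to $\delta/(1-\gamma)$. We then use Lemma~\ref{lem:trunc} with horizon $b=\lfloor T/D\rfloor$ and conclude with the triangle inequality.

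\textbf{Step 1 (batch recursion).} We establish the analogue of \eqref{eq:approx-rec},
\[
\|V^{[b+1]}-V^{\star,(b+1)}\|_\infty \le \delta + \gamma\,\|V^{[b]}-V^{\star,(b)}\|_\infty .
\]
Lemma~\ref{lem:async-batch} guarantees that within batch $b$ every state $s\in\cS_j$ receives at least one backup $\widetilde{\cT}_j$. That backup uses boundary values from neighbors that were sent no earlier than the start of batch $b$, so they carry at least the radius-$b$ information in $V^{[b]}$.

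To make the contraction step rigorous, we compare against an \emph{arbitrary mixture} of stale iterates. Each successor value used lies in $\{V^{(t)}(s'): t\ge (b-1)D\}$, so we first prove a monotone sandwich bound. In the exact case with $r\ge 0$ and $V^{(0)}=0$, the Bellman iterates are monotone nondecreasing in time. This holds because $\cT$ is monotone and $\cT 0\ge 0$, and asynchronous partial updates preserve monotonicity by the standard Bertsekas argument. Hence each used value is bounded below by the radius-$b$ iterate and above by $V^\star$.

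With $\delta>0$, we instead carry an error envelope $e_b$ such that every value present during batch $b$ lies within $e_b$ of some $\cT^k 0$ with $k\ge b$. Since $\cT^k 0$ increases to $V^\star$, the error against $V^{\star,(b)}$ is then controlled from below by $e_b$. From above, the excess $\cT^k0-\cT^b0$ is at most $\gamma^b/(1-\gamma)$, and this excess is absorbed directly against $V^\star$ at the final step.

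\textbf{Step 2 (unrolling).} Starting from $e_0=0$ and using $e_{b+1}\le\delta+\gamma e_b$, we get $e_b\le\delta/(1-\gamma)$. We then measure the final iterate against $V^\star$ directly. For $T$ wall-clock rounds, let $b=\lfloor T/D\rfloor$. Every value in $V^{(T)}$ has completed at least $b$ batches, since values produced later in an incomplete batch only come from further backups. Each value is therefore within $\delta/(1-\gamma)$ of some $\cT^k 0$ with $k\ge b$. Lemma~\ref{lem:trunc} and monotonicity give $\|\cT^k0-V^\star\|_\infty\le \gamma^{b}/(1-\gamma)$. The triangle inequality yields the stated bound.

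\textbf{Step 3 (round count).} To obtain $\gamma^{\lfloor T/D\rfloor}/(1-\gamma)\le\eps$, it suffices that $\lfloor T/D\rfloor\ge \log(1/((1-\gamma)\eps))/\log(1/\gamma)$. Using $\log(1/\gamma)\asymp 1-\gamma$ exactly as in Theorem~\ref{thm:direct}, this gives $T=O(D\log(1/\eps)/(1-\gamma))$.

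The main obstacle is Step 1. Asynchronous updates mix values of different ages, so $V^{(T)}$ is not literally $\cT^b 0$ plus noise. The plan is to handle this with the monotone-sandwich or envelope argument above, in the style of \citet{bertsekas1989parallel}'s box condition. The idea is to show that the set $\{V: \|V - \cT^k 0\|\le e \text{ for some } k\ge b\}$ is mapped by any batch of stale backups into the corresponding set with $b+1$ and $\delta+\gamma e$. If the monotonicity route runs into trouble because the approximate operator $\widetilde{\cT}_j$ is not monotone, the fallback is a cruder box argument: bound $\|V^{(t)}-V^\star\|_\infty$ directly by nested boxes $\|V-V^\star\|\le \gamma^b\|V^{(0)}-V^\star\|+\delta/(1-\gamma)$. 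This gives the same $\gamma^{\lfloor T/D\rfloor}/(1-\gamma)$ term, because $\|V^{(0)}-V^\star\|_\infty\le 1/(1-\gamma)$.
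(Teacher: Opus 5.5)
\textbf{The gap is the one-level-per-batch step, and it cannot be closed.} To go from level $b$ to level $b+1$ within batch $b$, every successor value entering the guaranteed batch-$b$ backup must already be at level $b$. Your claim that those values were ``sent no earlier than the start of batch $b$'' would give this, but Assumption~\ref{ass:async} does not. A value sent in the last round of batch $b-1$ may be consumed as late as round $(b+1)D-2$. Algorithm~\ref{alg:async} also allows the machine to perform its single mandated update of batch $b$ \emph{before} consuming that value, so the backup can run on a value produced in batch $b-2$. The model supports only your weaker remark, that used values have the form $V^{(t)}(s')$ with $t\ge (b-1)D$. Fed into your envelope, this turns level-$(b-1)$ inputs into level-$b$ outputs, i.e.\ one level per two batches. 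The fallback box argument fails the same way: a box is entered only after stale messages are flushed \emph{and} every machine has updated again, which takes about $2D$ rounds.

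This is not slack in the proof: the displayed inequality is false for $D\ge 2$. Take $D=2$, $\delta=0$, and a path $j_0,\dots,j_4$ with one state $x_\ell$ on $j_\ell$. Use transitions $x_\ell\mapsto x_{\ell+1}$ and $x_4\mapsto x_4$, with $r(x_4)=1$ and all other rewards $0$, so $V^\star(x_0)=\gamma^4/(1-\gamma)$. Let $j_4$ update and send in every round. For $h=1,\dots,4$, let $j_{4-h}$ update and send in every round with the parity of $h$. In round $3h-2$, let it update first and only then consume the message $j_{5-h}$ sent in round $3h-3$. Every message is consumed one round after it is sent, and every machine updates every second round, so the schedule is admissible. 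Yet $x_{4-h}$ first becomes positive only in round $3h$. Hence $V^{(11)}(x_0)=0$ and $\|V^{(11)}-V^\star\|_\infty\ge\gamma^4/(1-\gamma)>\gamma^{\lfloor 11/2\rfloor}/(1-\gamma)$. In general, a single hop can cost $2D-1$ rounds.

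The paper's own proof has the same hole. Lemma~\ref{lem:async-batch} presumes the batch-$b$ backup happens after the batch-$(b-1)$ message is consumed. The proof of Theorem~\ref{thm:async-upper} then treats successor values as agreeing with $V^{[b]}$, which is exactly the mixed-age issue you flagged.

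Your componentwise sandwich is the right repair and is sound as stated. Suppose every input satisfies $\cT^b0-e\le V\le V^\star+e$ with the constant $e=\delta/(1-\gamma)$. Monotonicity and the constant-shift property of the exact $\cT$, together with $\|\widetilde{\cT}_jV-\cT_jV\|_\infty\le\delta$, then put the output in the level-$(b+1)$ sandwich. These sets are nested, so later backups never leave them. You need neither monotonicity of $\widetilde{\cT}_j$ nor time-monotonicity of the iterates.

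With honest timing, level $b$ is secured for all backups after about $b(2D-1)$ rounds, assuming each update is sent and messages are consumed in order. This proves the bound with $\lfloor T/D\rfloor$ replaced by roughly $\lfloor T/(2D)\rfloor$. That still gives $T=O\bigl(D\log(1/\eps)/(1-\gamma)\bigr)$, but not the inequality as displayed.
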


\begin{proof}
By Lemma~\ref{lem:async-batch}, after $b$ complete batches the iterate $V^{[b]}$ has incorporated exactly $b$ layers of graph dependencies. The same approximate-Bellman analysis as in Lemma~\ref{lem:lightcone}, with ``round'' replaced by ``batch,'' gives the recursion
\[
\|V^{[b+1]} - V^{\star,(b+1)}\|_\infty \le \delta + \gamma\|V^{[b]} - V^{\star,(b)}\|_\infty.
\]
Indeed, by Lemma~\ref{lem:async-batch} the value at any state $s\in\cS_j$ in $V^{[b+1]}$ is the result of applying $\widetilde{\cT}_j$ to a vector of successor values that themselves agree with $V^{[b]}$, so the same triangle-inequality decomposition $|\widetilde{\cT}_j V^{[b]}(s) - \cT_j V^{[b]}(s)| + |\cT V^{[b]}(s) - \cT V^{\star,(b)}(s)|$ applies and contributes $\delta + \gamma\|V^{[b]} - V^{\star,(b)}\|_\infty$. Unrolling from $V^{[0]} = V^{\star,(0)} = 0$,
\[
\|V^{[b]} - V^{\star,(b)}\|_\infty \le \sum_{t=0}^{b-1} \gamma^t \delta \le \frac{\delta}{1-\gamma}.
\]
Combining with Lemma~\ref{lem:trunc},
\[
\|V^{[b]} - V^\star\|_\infty \le \|V^{[b]} - V^{\star,(b)}\|_\infty + \|V^{\star,(b)} - V^\star\|_\infty \le \frac{\delta}{1-\gamma} + \frac{\gamma^b}{1-\gamma}.
\]

For arbitrary $T$, set $b := \lfloor T/D\rfloor$. Then $V^{(T)}$ has completed at least $b$ full batches, and within the partial batch following round $bD$ the algorithm only performs additional updates that can only further refine values (not increase the error past $V^{[b]}$'s bound, by the same approximate-contraction argument). Hence
\[
\|V^{(T)} - V^\star\|_\infty \le \frac{\delta}{1-\gamma} + \frac{\gamma^{\lfloor T/D\rfloor}}{1-\gamma}.
\]
To make the truncation term at most $\eps$, take $\gamma^{\lfloor T/D\rfloor}\le (1-\gamma)\eps$, equivalently
\[
T = O\!\left(\frac{D}{1-\gamma}\log\frac{1}{\eps}\right).
\]
\end{proof}

\begin{corollary}[Asynchronous tightness]
\label{cor:async-tight}
Under the bounded-delay model with maximum delay $D$, the wall-clock round complexity of distributed offline DP on graphs with $\diam \ge L_\eps$ is
\[
R^\star_{\mathrm{async}} = \Theta\!\left(\frac{D}{1-\gamma}\log\frac{1}{\eps}\right)
\]
up to constants. The lower bound is from Theorem~\ref{thm:async-lower}, the upper bound from A-SDBP (Theorem~\ref{thm:async-upper}).
\end{corollary}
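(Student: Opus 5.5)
The plan is to obtain the statement by putting the two one-sided bounds already proved next to each other, and then checking that their orders match. For the lower side I would invoke Theorem~\ref{thm:async-lower}. Since $\diam(G_{\cD})\ge L_\eps$, it applies directly and gives that any uniformly $\eps$-accurate asynchronous protocol needs at least $D L_\eps$ wall-clock rounds under the adversarial maximal-delay schedule.

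The step that needs care is turning $L_\eps$ into the $\Theta$ expression. From the definition, $L_\eps=\lfloor \log(1/(2\eps))/\log(1/\gamma)\rfloor$. Using $\log(1/\gamma)\asymp 1-\gamma$ (more precisely $1-\gamma\le\log(1/\gamma)\le (1-\gamma)/\gamma$), and restricting to $\eps$ small enough that $\log(1/(2\eps))\ge c\log(1/\eps)$ and that the floor loses at most a constant factor, I would get
\[
D L_\eps=\Omega\!\left(\frac{D}{1-\gamma}\log\frac1\eps\right)
\]
in the regime where $\gamma$ is bounded away from $0$. This is the regime of interest; I would state the constants as absolute for, say, $\gamma\ge 1/2$ and $\eps\le 1/4$.

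For the upper side I would invoke Theorem~\ref{thm:async-upper}. It gives error at most $\gamma^{\lfloor T/D\rfloor}/(1-\gamma)+\delta/(1-\gamma)$. Choosing $\lfloor T/D\rfloor\ge \log(1/((1-\gamma)\eps))/\log(1/\gamma)$ then yields accuracy $\eps+\delta/(1-\gamma)$ with $T=O\bigl(D\log(1/((1-\gamma)\eps))/(1-\gamma)\bigr)$. So the corollary must be read with the same accuracy convention as Corollary~\ref{cor:tight}: $\eps$-accuracy up to the statistical floor $\delta/(1-\gamma)$, with $\delta=0$ in the exact case.

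The main obstacle is the mismatch between the $\log(1/((1-\gamma)\eps))$ in the upper bound and the $\log(1/\eps)$ in the lower bound. I would handle it by absorbing $\log(1/(1-\gamma))$ into $\log(1/\eps)$ under the natural assumption $\eps\le 1-\gamma$, or more generally whenever $\eps\le(1-\gamma)^{c}$, since then $\log(1/((1-\gamma)\eps))\le (1+1/c)\log(1/\eps)$. I would state this regime explicitly, so that the two bounds agree up to absolute constants and give $R^\star_{\mathrm{async}}=\Theta\bigl(D\log(1/\eps)/(1-\gamma)\bigr)$.
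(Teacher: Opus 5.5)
Your proposal is correct and follows the paper's own route: the paper justifies the corollary only by placing Theorem~\ref{thm:async-lower} ($R\ge DL_\eps$) next to the A-SDBP bound of Theorem~\ref{thm:async-upper}. Your extra bookkeeping is needed, not cosmetic. The paper silently drops both the $\delta/(1-\gamma)$ floor and the $\log\bigl(1/(1-\gamma)\bigr)$ term inside $\log\bigl(1/((1-\gamma)\eps)\bigr)$, and that term is real when $\eps\gg 1-\gamma$ (use reward $1$ instead of $1-\gamma$ in the lower-bound chain to see this). So your reading with $\delta=0$, $\gamma\ge 1/2$ and $\eps\le(1-\gamma)^{c}$ is what makes the two bounds agree up to absolute constants.
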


The async results show that the locality framework extends cleanly: delay $D$ enters as a multiplicative factor on both lower and upper bounds, but the underlying obstruction remains discounted graph distance, not spectral mixing.

\section{Spectral Upper Bound for Gossip-FVI}
\label{sec:upper}

For contrast, we analyze a gossip-style FVI variant natural from the distributed-optimization viewpoint.

\subsection{Algorithm}
Let $W\in\R^{M\times M}$ be a symmetric doubly stochastic mixing matrix supported on $G_{\cD}$, with spectral gap $\gap(W):=1-\lambda_2(W)$. Each machine $j$ maintains a global value estimate $V_j^{(t)} \in \R^{|\cS|}$ (not just a local slice).

\begin{algorithm}[t]
\caption{Gossip Fitted Value Iteration}
\label{alg:gossip}
\begin{algorithmic}[1]
\REQUIRE Shards $\{\cD_j\}$, mixing matrix $W$, initial values $V_j^{(0)}$.
\FOR{$t=0,1,\ldots,T-1$}
    \STATE \textbf{Local Bellman step:} $U_j^{(t+1)}=\widetilde{\cT}_j(V_j^{(t)})$.
    \STATE \textbf{Gossip step:} $V_j^{(t+1)}=\sum_{k=1}^M W_{jk}U_k^{(t+1)}$.
\ENDFOR
\RETURN $\bar V^{(T)}:=\frac1M\sum_j V_j^{(T)}$.
\end{algorithmic}
\end{algorithm}

\begin{lemma}[Two-dimensional recursion for gossip-FVI]
\label{lem:gossip-recursion}
Let $E_t := \|\bar V^{(t)} - V^\star\|_\infty$ and $D_t := \max_j \|V_j^{(t)} - \bar V^{(t)}\|_\infty$. Under Assumption~\ref{ass:contract}, the gossip iterates satisfy
\[
\begin{pmatrix}E_{t+1}\\ D_{t+1}\end{pmatrix}
\preceq
\begin{pmatrix}\gamma & \gamma \\ 0 & 2\gamma(1-\gap(W))\end{pmatrix}
\begin{pmatrix}E_t\\ D_t\end{pmatrix}
+
\begin{pmatrix}\delta\\ 2\delta(1-\gap(W))\end{pmatrix},
\]
where $\preceq$ denotes entrywise inequality.
\end{lemma}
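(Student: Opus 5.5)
We track the mean error and the consensus disagreement separately. Write $U^{(t+1)}_j=\widetilde{\cT}_j(V_j^{(t)})$ and $\bar U^{(t+1)}=\frac1M\sum_j U_j^{(t+1)}$. Since $W$ is doubly stochastic, the gossip step preserves the average, so $\bar V^{(t+1)}=\bar U^{(t+1)}$. We interpret $\widetilde{\cT}_j$ as acting on the full vector, with Assumption~\ref{ass:contract} giving $\|\widetilde{\cT}_j V-\cT V\|_\infty\le\delta$ uniformly in $j$ and $V$.

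\emph{First row (mean error).} We use the fixed point $V^\star=\cT V^\star$ and the decomposition
\[
\bar U^{(t+1)}-V^\star=\frac1M\sum_j\bigl(\widetilde{\cT}_j V_j^{(t)}-\cT V_j^{(t)}\bigr)+\frac1M\sum_j\bigl(\cT V_j^{(t)}-\cT V^\star\bigr).
\]
The first average is bounded by $\delta$. For the second, we apply $\gamma$-contraction termwise: $\|\cT V_j^{(t)}-\cT V^\star\|_\infty\le\gamma\|V_j^{(t)}-V^\star\|_\infty\le\gamma(E_t+D_t)$, using the triangle inequality through $\bar V^{(t)}$. This gives $E_{t+1}\le\gamma E_t+\gamma D_t+\delta$, which is the first row.

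\emph{Second row (disagreement).} We have $V_j^{(t+1)}-\bar V^{(t+1)}=\sum_k (W_{jk}-\tfrac1M)U_k^{(t+1)}=\sum_k (W_{jk}-\tfrac1M)(U_k^{(t+1)}-c)$ for any common vector $c$. The natural choice is $c=\cT\bar V^{(t)}$. Then $\|U_k^{(t+1)}-c\|_\infty\le\delta+\gamma D_t$ by the assumption and contraction. The remaining task is to bound the operator $W-\frac1M\mathbf 1\mathbf 1^\top$ acting coordinatewise, in max-over-machines sup norm, by $2(1-\gap(W))$. This yields $D_{t+1}\le 2(1-\gap(W))(\gamma D_t+\delta)$. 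The stated entry is $2\delta(1-\gap(W))$, so the $\delta$ term carries the same factor, which matches.

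\emph{Main obstacle.} The difficulty is that last contraction factor. Spectral gap controls $\|W-J\|_2=1-\gap(W)$ only in the Euclidean norm across machines. We need a bound in the $\ell_\infty$-over-machines norm, and the crude row-sum bound $\sum_k|W_{jk}-1/M|\le 2$ has no gap. The first thing we would try is to run the disagreement recursion in the $\ell_2$ norm across machines for each fixed state coordinate, where $\|(W-J)x\|_2\le(1-\gap(W))\|x\|_2$ holds exactly. We would then define $D_t$ through that norm, or pass back to sup norm. The factor $2$ in the statement would absorb the conversion loss together with the $\delta$ and contraction terms, which are themselves controlled coordinatewise. If that conversion does not close cleanly, we would instead state the second row with $D_t$ measured in the $\ell_2$-across-machines, $\ell_\infty$-over-states norm. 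In that case we would use $\|\cdot\|_{\infty}\le\|\cdot\|_2$ only in the first row, where the bound $\|V_j-\bar V\|_\infty\le D_t$ is all that is needed.
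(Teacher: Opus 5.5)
Your first row is correct and is the paper's average-error recursion. The gap is the second row. You reduce it to bounding $W-\tfrac1M\mathbf{1}\mathbf{1}^\top$ by $2(1-\gap(W))$ in the max-over-machines norm, leave that step open, and neither fallback closes it. Converting a per-state $\ell_2$-over-machines contraction back to $\max_j\|\cdot\|_\infty$ costs a factor $\sqrt{M}$, which the constant $2$ cannot absorb. Redefining $D_t$ proves a different inequality, and in an $\ell_2$-over-machines norm the $M$ separate $\delta$-errors aggregate to $\sqrt{M}\,\delta$; normalizing by $1/\sqrt{M}$ only moves the $\sqrt{M}$ into the first row. Also, the $\ell_2$ factor on $\mathbf{1}^\perp$ is $\max_{i\ge 2}|\lambda_i(W)|$, which can exceed $1-\gap(W)=\lambda_2(W)$. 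The paper's factor $2$ has a different source. It centers at $\bar U^{(t+1)}$, gets $\|U_j^{(t+1)}-\bar U^{(t+1)}\|_\infty\le 2\gamma D_t+2\delta$, and then applies Lemma~\ref{lem:gossip}. That lemma is exactly the max-over-machines contraction by $1-\gap(W)$ you could not establish.

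Your doubt is justified, and it undermines the paper's proof as well. Lemma~\ref{lem:gossip} is supported only by an $\ell_2$ operator-norm remark, and the second row of the recursion is false in general. Your decomposition $\sum_k(W_{jk}-\tfrac1M)(U_k^{(t+1)}-\cT\bar V^{(t)})$ already gives the correct bound $D_{t+1}\le\kappa(W)(\gamma D_t+\delta)$, with $\kappa(W):=\max_j\sum_k|W_{jk}-\tfrac1M|\le 2$. Its $\delta$-coefficient is attained. Take all $V_k^{(t)}$ equal, so $D_t=0$, and take $\widetilde{\cT}_k=\cT+\delta\sigma_k\mathbf{1}$ with $\sigma_k=\operatorname{sign}(W_{jk}-\tfrac1M)$. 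This is admissible in the full-vector reading that you and the paper's first row both use, and it gives $V_j^{(t+1)}-\bar V^{(t+1)}=\delta\sum_k|W_{jk}-\tfrac1M|\,\mathbf{1}$. A row with $s_j$ nonzero entries has $\sum_k|W_{jk}-\tfrac1M|\ge 2(1-s_j/M)$. Hence the claimed $D_{t+1}\le 2\delta(1-\gap(W))$ fails whenever $\gap(W)>s_j/M$. For example, the lazy walk on the $6$-cube ($M=64$, $\gap(W)=1/6$) gives $D_{t+1}=\tfrac{57}{32}\delta>\tfrac53\delta$. So the lemma cannot be proved as stated. What is provable is either the recursion with $\kappa(W)$ in place of $2(1-\gap(W))$, which has no spectral gain, or an $\ell_2$-over-machines version that pays $\sqrt{M}$ when returning to the sup norm.
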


\begin{proof}
The average-error recursion is
\[
E_{t+1} \le \delta + \gamma E_t + \gamma D_t.
\]
For the disagreement, define $U_j^{(t+1)} := \widetilde{\cT}_j(V_j^{(t)})$ and $\bar U^{(t+1)} := M^{-1}\sum_j U_j^{(t+1)}$. Then
\[
\|U_j^{(t+1)} - \bar U^{(t+1)}\|_\infty \le 2\gamma D_t + 2\delta,
\]
and the contraction of $W$ on the disagreement subspace gives
\[
D_{t+1} \le (1-\gap(W))(2\gamma D_t + 2\delta).
\]
Combining these two inequalities yields the stated matrix recursion.
\end{proof}

\begin{theorem}[Spectral upper bound for gossip-FVI]
\label{thm:upper}
Under Assumption~\ref{ass:contract} with symmetric doubly stochastic $W$, the iterates of Algorithm~\ref{alg:gossip} satisfy
\[
\max_{j\in[M]}\|V_j^{(T)}-V^\star\|_\infty
\le
\bigl(1-c\,(1-\gamma)\gap(W)\bigr)^T B_0
+\frac{C\delta}{(1-\gamma)\gap(W)}
\]
for universal constants $c,C>0$, where
\[
B_0 := \|\bar V^{(0)} - V^\star\|_\infty + \alpha \max_j \|V_j^{(0)} - \bar V^{(0)}\|_\infty,
\]
with $\alpha := 4/(1-\gamma)$. To achieve error at most $\eps + C\delta/((1-\gamma)\gap(W))$ it suffices to take
\[
T=O\!\left(\frac{1}{(1-\gamma)\gap(W)}\log\frac{B_0}{\eps}\right).
\]
\end{theorem}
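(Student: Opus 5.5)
The plan is a one-dimensional Lyapunov argument built on Lemma~\ref{lem:gossip-recursion}. Set
\[
\Phi_t := E_t + \alpha D_t, \qquad \alpha = \frac{4}{1-\gamma},
\]
so that $\Phi_0 = B_0$. The per-machine error satisfies
\[
\max_j \|V_j^{(t)}-V^\star\|_\infty \le E_t + D_t \le \Phi_t,
\]
since $\alpha \ge 1$. It therefore suffices to prove a scalar recursion
\[
\Phi_{t+1} \le \rho\,\Phi_t + \beta\delta, \qquad 1-\rho \ge c(1-\gamma)\gap(W),
\]
and then unroll it. Unrolling gives $\Phi_T \le \rho^T B_0 + \beta\delta/(1-\rho)$. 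Solving $\rho^T B_0 \le \eps$, using $\log(1/\rho) \ge 1-\rho$, gives the stated $T$.

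\emph{Step 1 (combine the two rows).} Adding the first row of the lemma to $\alpha$ times the second gives
\[
\Phi_{t+1} \le \gamma E_t + \bigl(\gamma + 2\alpha\gamma(1-\gap(W))\bigr) D_t + \delta\bigl(1 + 2\alpha(1-\gap(W))\bigr).
\]
To conclude $\Phi_{t+1}\le \rho\Phi_t+\beta\delta$, we need $\gamma\le\rho$ and
\[
\frac{\gamma}{\alpha} + 2\gamma(1-\gap(W)) \le \rho .
\]
The term $\gamma/\alpha = \gamma(1-\gamma)/4$ is exactly what the choice of $\alpha$ absorbs. It costs only a $(1-\gamma)/4$ fraction of the available slack. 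It is, however, the reason the rate carries a factor $(1-\gamma)$: the coupling of $E$ into $D$ must be paid for with the slack $1-\gamma$.

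\emph{Step 2 (the obstacle: the factor $2$).} As written, the $D$-row coefficient $2\gamma(1-\gap(W))$ can exceed $1$ when $\gap(W)$ is small. In that regime no choice of $\alpha$ gives a contraction. I expect the real work to be sharpening the disagreement step of the lemma so that its coefficient is $\gamma(1-\gap(W))$ (up to a factor $1+O(\gap(W))$) rather than twice that.

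The first thing I would try is to measure disagreement in the mixed norm $\bigl(\sum_j \|V_j-\bar V\|_\infty^2\bigr)^{1/2}$, or coordinatewise in $\ell_2$ over machines. In that norm $W$ contracts the mean-zero subspace by exactly $1-\gap(W)$, with no factor $2$. The Bellman map $\widetilde\cT_j$ is $\gamma$-Lipschitz coordinatewise up to the $\delta$ perturbation. I would then convert back to $\max_j$, either at the end only or by noting that $\ell_\infty$-over-machines is bounded by $\ell_2$. With the sharpened coefficient, $\rho := \gamma(1-\gamma)/4 + \gamma(1-\gap(W)) + O(\cdot)$ satisfies $1-\rho \gtrsim (1-\gamma)\gap(W)$, which is precisely the claimed rate.

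If that fails, the fallback is to restrict to $\gap(W)\ge 1/2$, where $2\gamma(1-\gap(W))\le\gamma$ and Step~1 closes immediately. The general case would then be reduced to this one by grouping $k = O(1/\gap(W))$ gossip rounds per Bellman step, which preserves the $1/\gap(W)$ scaling in $T$.

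\emph{Step 3 (steady state).} With $\beta = 1+2\alpha = O(1/(1-\gamma))$, the floor is $\beta\delta/(1-\rho)$. Bounding this by $C\delta/((1-\gamma)\gap(W))$ requires keeping the $\delta$-coefficient of the $D$-row of order $(1-\gap(W))\delta$, i.e. without the $\alpha$ amplification. I would do this by bounding $D_\infty \lesssim \delta/\gap(W)$ from the sharpened second row alone, and then substituting into the first row to get $E_\infty \le (\delta+\gamma D_\infty)/(1-\gamma)$. This separate treatment of the fixed point, rather than reading the floor off $\Phi$, is what yields the stated $C\delta/((1-\gamma)\gap(W))$ floor instead of an extra $1/(1-\gamma)$ factor.
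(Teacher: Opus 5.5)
Your Step~1 is exactly the paper's route, and your Steps~2--3 correctly locate where it breaks. The paper asserts $\gamma+2\alpha\gamma(1-\gap(W))\le\alpha\rho$ through an identity whose $\gap$-free part is really $4-9\gamma+\gamma^2$, not $(1-\gamma)(4-\gamma)$. So that inequality fails for small $\gap(W)$ once $\gamma>(9-\sqrt{65})/2\approx 0.47$, and for $\gamma>1/2$ no choice of $\alpha$ can help, since $2\gamma(1-\gap(W))>1$. The paper's floor $(1+2\alpha)\delta/(1-\rho)$ is also of order $\delta/((1-\gamma)^2\gap(W))$, as you noticed.

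However, your repair does not close, so there is a genuine gap.

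\textbf{The mixed norm.} Spectral contraction of the disagreement holds in $\ell_2$ over machines with Hilbert-valued entries. It fails for $\ell_\infty(\cS)$-valued iterates. Suppose row $j$ of $W$ has $d_j\le M/2$ nonzeros. Take states $s_1,\dots,s_M$ and set $z_k(s_j)=1$ if $W_{jk}>0$, and $z_k(s_j)=-d_j/(M-d_j)$ otherwise. Then $\sum_k z_k=0$ and $\|z_k\|_\infty=1$, yet $(Wz)_j(s_j)=1$. So neither $\max_j\|\cdot\|_\infty$ nor $(\sum_j\|\cdot\|_\infty^2)^{1/2}$ shrinks at all.

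\textbf{Consequence for the fallback.} Make these states absorbing with zero reward, so $\cT V=\gamma V$ and $\delta=0$. Let $W$ be the random walk on a Paley graph, which has $\gap(W)>1/2$ and rows of support at most $M/2$. Then $D_1=\gamma D_0>2\gamma(1-\gap(W))D_0$. So the second row of Lemma~\ref{lem:gossip-recursion} is false in precisely the regime your fallback relies on. Grouping several gossip rounds per backup would in any case analyze a different algorithm from Algorithm~\ref{alg:gossip}.

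\textbf{The coordinatewise norm.} In $\max_s(\sum_j|z_j(s)|^2)^{1/2}$ averaging does contract. But the Bellman optimality operator is not $\gamma$-Lipschitz in this norm, because different machines pick different maximizing actions. Take zero rewards, actions from $s_0$ to $s_1,\dots,s_{M/2}$, $V_j=\mathbf 1_{\{s_j\}}$ for $j\le M/2$, and $V_j=0$ otherwise. The disagreement has norm at most $1$ before one backup and at least $\gamma\sqrt M/2$ after it. Even with a single action, either the $\delta$-injection or the return to $\max_j$ costs a factor $\sqrt M$, depending on normalization. That factor lands on the floor, and no universal $C$ absorbs it.

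\textbf{The missing idea.} You do not need the disagreement variable at all. Since $W$ is entrywise nonnegative with unit row sums,
\[
V_j^{(t+1)}-V^\star=\sum_k W_{jk}\bigl(\widetilde{\cT}_k V_k^{(t)}-\cT V^\star\bigr).
\]
The first row of Lemma~\ref{lem:gossip-recursion} already uses the per-machine bound $\|\widetilde{\cT}_k V-\cT V\|_\infty\le\delta$. With it you get
\[
\max_j\|V_j^{(t+1)}-V^\star\|_\infty\le\delta+\gamma\max_k\|V_k^{(t)}-V^\star\|_\infty .
\]
Unrolling, and using $\max_j\|V_j^{(0)}-V^\star\|_\infty\le E_0+D_0\le B_0$, gives $\max_j\|V_j^{(T)}-V^\star\|_\infty\le\gamma^T B_0+\delta/(1-\gamma)$.

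Since $\gap(W)=1-\lambda_2(W)\le 2$, you have $\gamma\le 1-\tfrac12(1-\gamma)\gap(W)$ and $1\le 2/\gap(W)$. So the theorem holds with $c=1/2$ and $C=2$. The round count follows from $\gamma^T\le e^{-(1-\gamma)T}$.

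Note that this bound is gap-free. Under this model, the $1/\gap(W)$ factors in the statement come from the disagreement-based analysis, not from Algorithm~\ref{alg:gossip} itself.
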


\begin{proof}
Let $\bar V^{(t)} := M^{-1}\sum_j V_j^{(t)}$, $E_t := \|\bar V^{(t)} - V^\star\|_\infty$, and $D_t := \max_j \|V_j^{(t)} - \bar V^{(t)}\|_\infty$. By Lemma~\ref{lem:gossip-recursion},
\[
E_{t+1} \le \delta + \gamma E_t + \gamma D_t,
\qquad
D_{t+1} \le (1-\gap(W))(2\gamma D_t + 2\delta).
\]
Define the potential
\[
B_t := E_t + \alpha D_t, \qquad \alpha := \frac{4}{1-\gamma}.
\]
Then
\begin{align*}
B_{t+1}
&\le \gamma E_t + \Bigl[\gamma + 2\alpha\gamma(1-\gap(W))\Bigr]D_t + \Bigl[1+2\alpha(1-\gap(W))\Bigr]\delta.
\end{align*}
Set $c:=1/8$ and $\rho := 1-c(1-\gamma)\gap(W)$. Since $\gap(W)\le 1$, we have $\gamma \le \rho$. For the $D_t$ coefficient, note that
\[
\gamma + 2\alpha\gamma(1-\gap(W)) \le \alpha\rho.
\]
Indeed, after substituting $\alpha = 4/(1-\gamma)$ and multiplying through by $(1-\gamma)$, this inequality becomes
\[
\gamma(1-\gamma)+8\gamma(1-\gap(W)) \le 4 - \tfrac12(1-\gamma)\gap(W),
\]
whose right-minus-left side equals
\[
(1-\gamma)(4-\gamma) + \gap(W)\Bigl(8\gamma - \tfrac12(1-\gamma)\Bigr) \ge 0.
\]
Therefore
\[
B_{t+1} \le \rho B_t + C_0\delta
\]
for $C_0 := 1+2\alpha = O((1-\gamma)^{-1})$. Iterating,
\[
B_T \le \rho^T B_0 + \frac{C_0}{1-\rho}\,\delta.
\]
Since $1-\rho = c(1-\gamma)\gap(W)$, the steady-state term is bounded by $C\delta/((1-\gamma)\gap(W))$ for a universal constant $C$. Finally,
\[
\max_j \|V_j^{(T)} - V^\star\|_\infty \le E_T + D_T \le B_T,
\]
which yields the theorem.
\end{proof}

\subsection{Algorithmic separation in two dimensions}

\begin{corollary}[Separation on canonical families]
\label{cor:separation}
On a path graph with $M$ machines:
\begin{itemize}
    \item Direct propagation (Theorem~\ref{thm:direct}) reaches error $\eps + \delta/(1-\gamma)$ in $O(\log(1/\eps)/(1-\gamma))$ rounds.
    \item Gossip-FVI (Theorem~\ref{thm:upper}) requires $O(M^2 \log(1/\eps)/(1-\gamma))$ rounds and has steady-state floor $O(M^2\delta/(1-\gamma))$, since the spectral gap of the lazy random walk on a path is $\Theta(M^{-2})$.
\end{itemize}
The $M^2$ factor in the gossip bound is algorithmic, not fundamental, and it appears in two places: as a multiplier on the convergence rate and as a multiplier on the asymptotic accuracy floor.
\end{corollary}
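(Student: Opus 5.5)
The corollary follows by specializing Theorems~\ref{thm:direct} and~\ref{thm:upper} to the path and supplying one spectral estimate. For the first bullet there is nothing graph-specific to do. Theorem~\ref{thm:direct} holds on any support graph, so taking $T=\lceil \log(1/((1-\gamma)\eps))/\log(1/\gamma)\rceil$ gives error at most $\eps+\delta/(1-\gamma)$. Using $\log(1/\gamma)\ge 1-\gamma$, this $T$ is $O(\log(1/\eps)/(1-\gamma))$; the extra $\log(1/(1-\gamma))$ term is absorbed as in that theorem's proof. I will note explicitly that $M$ does not enter this bound: the truncation horizon is independent of the diameter.

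For the second bullet I fix the canonical gossip matrix on the path, the lazy Metropolis walk. It has $W_{j,j\pm1}=1/4$ for interior neighbours and the remaining mass on the diagonal. This matrix is symmetric, doubly stochastic and supported on the path. The key step is $\gap(W)=\Theta(M^{-2})$. I would compute this directly rather than cite it. $W=I-\tfrac14\mathcal{L}$, where $\mathcal{L}$ is the path Laplacian with Neumann-type ends. The eigenvalues of $\mathcal{L}$ are $2-2\cos(\pi k/M)$ for $k=0,\dots,M-1$, so
\[
\lambda_2(W)=1-\tfrac12\bigl(1-\cos(\pi/M)\bigr)
\qquad\text{and}\qquad
\gap(W)=\tfrac12\bigl(1-\cos(\pi/M)\bigr)=\sin^2\!\bigl(\pi/(2M)\bigr).
\]
The inequalities $2x/\pi\le\sin x\le x$ on $[0,\pi/2]$ then give $M^{-2}\le\gap(W)\le \pi^2/(4M^2)$. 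Laziness guarantees that all eigenvalues are nonnegative, so $1-\lambda_2$ is the relevant contraction factor on the disagreement subspace. This is the quantity Lemma~\ref{lem:gossip-recursion} uses.

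Substituting $1/\gap(W)\le M^2$ into Theorem~\ref{thm:upper} yields $T=O\bigl(M^2\log(B_0/\eps)/(1-\gamma)\bigr)$ rounds. This becomes the stated $O(M^2\log(1/\eps)/(1-\gamma))$ once $B_0$ is treated as a constant, for instance under consensus initialization $V_j^{(0)}=0$, where $B_0\le 1/(1-\gamma)$ and the extra logarithm is lower order. The same substitution gives the steady-state floor $C\delta/((1-\gamma)\gap(W))=O(M^2\delta/(1-\gamma))$.

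The main obstacle is the word ``requires'' and the claim that the $M^2$ factor is algorithmic. Theorem~\ref{thm:upper} is only an upper bound, so a rigorous version of the second bullet states upper bounds. I would phrase it that way, and justify the $\Theta(M^{-2})$ reading through the lower estimate $\gap(W)\ge M^{-2}$ together with the matching $\pi^2/(4M^2)$ upper estimate, which shows the bound is not loose in its gap dependence. The claim that the factor is not fundamental then follows from the first bullet alone. The same problem on the same path is solved in $M$-independent time with floor $\delta/(1-\gamma)$, and Theorem~\ref{thm:lower} shows that only the $L_\eps$ locality cost is unavoidable. If a genuine gossip lower bound were wanted, I would start from an instance with disagreement concentrated in the slowest eigenvector $\cos(\pi k (j-\tfrac12)/M)$ and show that it decays no faster than $(1-\gap(W))^t$. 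I expect that step to be delicate because the Bellman step mixes coordinates, so I would not include it in the corollary.
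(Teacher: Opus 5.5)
Your proposal is correct and follows the paper's own argument, which is left implicit: Theorem~\ref{thm:direct} is graph-independent, and the gossip bullet is Theorem~\ref{thm:upper} with $\gap(W)=\Theta(M^{-2})$ substituted, which the paper only asserts and you verify explicitly from the lazy Metropolis spectrum, $\gap(W)=\sin^2(\pi/(2M))\in[M^{-2},\pi^2/(4M^2)]$. Reading ``requires'' only as an upper bound is in fact necessary, not merely cautious: if $\widetilde{\cT}_j$ is $\delta$-close to $\cT$ on all of $\cS$ (which is how Lemma~\ref{lem:gossip-recursion} uses the assumption), the nonnegativity and unit row sums of $W$ give $\max_j\|V_j^{(t+1)}-V^\star\|_\infty\le\delta+\gamma\max_j\|V_j^{(t)}-V^\star\|_\infty$, so the eigenvector lower-bound route you sketch cannot succeed.
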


This separation explains the status of conductance and Cheeger inequalities in distributed offline dynamic programming. They control the convergence rate \emph{and} steady-state accuracy of gossip-style algorithms, but neither aspect of the underlying problem requires them. Direct Bellman propagation preserves directional dependency information; neighbor averaging discards it and pays a mixing penalty in both dimensions.

\subsection{From spectral gap to conductance}

\begin{proposition}[Conductance corollary]
\label{prop:conductance}
For symmetric doubly stochastic irreducible $W$, $\Phi(W)^2/2 \le \gap(W) \le 2\Phi(W)$ \citep{cheeger1970,sinclair1989}, so Theorem~\ref{thm:upper} gives
\[
T = O\!\left(\frac{1}{(1-\gamma)\Phi(W)^2}\log\frac{B_0}{\eps}\right),
\]
improving to $O(1/((1-\gamma)\Phi(W)) \log(B_0/\eps))$ when $\gap(W) = \Omega(\Phi(W))$. Neither $\Phi$ nor $\gap$ appears in Theorem~\ref{thm:direct}.
\end{proposition}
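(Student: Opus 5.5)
The plan is to treat this as a direct corollary of Theorem~\ref{thm:upper}: substitute the Cheeger-type sandwich between $\gap(W)$ and $\Phi(W)$ into the round bound there. First I will confirm that the hypotheses match. $W$ is symmetric and doubly stochastic, so it is the transition matrix of a reversible chain with uniform stationary distribution $\pi_i = 1/M$. Irreducibility makes $\lambda_2(W)<1$, so $\gap(W)>0$ and Theorem~\ref{thm:upper} gives a finite bound. I also need to check that the conductance in the paper's definition is the standard bottleneck ratio $Q(S,S^c)/\min\{\pi(S),\pi(S^c)\}$. With uniform $\pi$, $Q(S,S^c)=\sum_{i\in S,j\notin S}\pi_i W_{ij} = M^{-1}\sum W_{ij}$, and $\pi(S)=|S|/M$.

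This normalization check is the main obstacle. The paper writes the numerator as $\sum W_{ij}$ with no factor $1/M$, which would make its $\Phi(W)$ equal to $M$ times the standard one. I would first decide whether the intended definition includes the stationary weight. If it does not, the cleanest fix is to rewrite $\Phi$ with the factor $\pi_i=1/M$, after which the cited inequality holds verbatim. A second subtlety is the lower bound $\gap\ge\Phi^2/2$. It concerns $1-\lambda_2$ and is insensitive to negative eigenvalues, so it applies directly. For the upper bound $\gap\le 2\Phi$, I would use the variational characterization $\gap=\min_f \mathcal E(f,f)/\mathrm{Var}_\pi(f)$ with the test function $f=\mathbf 1_S-\pi(S)$. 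This gives $\gap \le Q(S,S^c)/(\pi(S)\pi(S^c))\le 2\Phi$, since $\pi(S^c)\ge 1/2$ for the minimizing side.

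With the sandwich in hand, the substitution step is immediate. From $\gap(W)\ge \Phi(W)^2/2$ we get $1/((1-\gamma)\gap(W)) \le 2/((1-\gamma)\Phi(W)^2)$, and inserting this into the round count of Theorem~\ref{thm:upper} yields $T=O\bigl(\log(B_0/\eps)/((1-\gamma)\Phi(W)^2)\bigr)$. The same substitution also turns the steady-state floor into $O(\delta/((1-\gamma)\Phi^2))$, which is worth recording. Under the additional hypothesis $\gap(W)\ge c'\Phi(W)$, the same substitution gives the improved $O\bigl(\log(B_0/\eps)/((1-\gamma)\Phi(W))\bigr)$ rate. The upper inequality $\gap\le 2\Phi$ shows that this linear dependence is the best achievable in terms of $\Phi$.

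Finally, the claim that neither $\Phi$ nor $\gap$ appears in Theorem~\ref{thm:direct} follows by inspection. That bound depends only on $\gamma$, $T$ and $\delta$, because Lemma~\ref{lem:lightcone} uses only Bellman contraction and hop-by-hop boundary exchange, never the mixing matrix.
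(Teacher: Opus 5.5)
Your proposal is correct and follows the argument the paper intends (the paper gives no separate proof): invoke the Cheeger sandwich for $\gap(W)=1-\lambda_2(W)$ and substitute $1/\gap(W)\le 2/\Phi(W)^2$ (or the hypothesis $\gap=\Omega(\Phi)$) into the round count of Theorem~\ref{thm:upper}, noting that Theorem~\ref{thm:direct} never involves $W$. Your normalization concern is genuine: as literally defined, the paper's $\Phi(W)$ is $M$ times the standard bottleneck ratio, and with that definition $\gap\ge\Phi^2/2$ fails (for $M\ge 3$, $W=\tfrac{1}{M}\mathbf{1}\mathbf{1}^\top$ has $\gap=1$ but $\Phi=\lceil M/2\rceil$), so inserting the factor $\pi_i=1/M$ is a necessary correction, not an optional one.
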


\section{Implications for Canonical Graph Families}
\label{sec:families}

\paragraph{Path graph ($M$ machines).}
Diameter $M-1$, spectral gap $\Theta(M^{-2})$. Lower: $\Omega(\min\{M, \log(1/\eps)/(1-\gamma)\})$. SDBP: $O(\log(1/\eps)/(1-\gamma))$. Gossip-FVI: $O(M^2 \log(1/\eps)/(1-\gamma))$ rounds, floor $O(M^2\delta/(1-\gamma))$.

\paragraph{Two-dimensional grid ($\sqrt M \times \sqrt M$).}
Diameter $\Theta(\sqrt M)$, spectral gap $\Theta(M^{-1})$. Lower: $\Omega(\min\{\sqrt M, \log(1/\eps)/(1-\gamma)\})$. SDBP: $O(\log(1/\eps)/(1-\gamma))$. Gossip-FVI: $O(M \log(1/\eps)/(1-\gamma))$ rounds, floor $O(M\delta/(1-\gamma))$.

\paragraph{Expander family.}
Diameter $O(\log M)$, spectral gap $\Theta(1)$. Lower: $\Omega(\log(1/\eps)/(1-\gamma))$. Both algorithms: $O(\log(1/\eps)/(1-\gamma))$.

\paragraph{Async on these families.}
Apply Corollary~\ref{cor:async-tight}: each $\log(1/\eps)/(1-\gamma)$ becomes $D\log(1/\eps)/(1-\gamma)$.

\section{Discussion}
\label{sec:discussion}

\paragraph{Conceptual implications.}
The natural invariant for distributed offline DP is discounted locality. Spectral-gap and conductance quantities enter only when one insists on consensus-style averaging. Practitioners minimizing synchronization rounds should use direct boundary propagation rather than gossip.

\paragraph{Two-dimensional separation.}
Beyond the convergence-rate gap, gossip-FVI's asymptotic accuracy floor scales as $1/((1-\gamma)\gap(W))$, while direct propagation's floor is $\delta/(1-\gamma)$ — independent of graph connectivity. Both effects compound the algorithmic separation: even with infinite rounds, gossip on poorly connected graphs cannot match the residual error achievable by direct propagation.

\paragraph{Why this differs from decentralized optimization.}
In decentralized optimization, local gradients are noisy, partial summaries that must be reconciled by averaging \citep{nedic2009distributed}. In Bellman iteration, the value of a successor state is the quantity actually needed: it should be propagated, not mixed. This mismatch explains why the gossip bound is suboptimal here, and why the suboptimality manifests in both rate and floor.

\paragraph{Bit complexity and federated interpretations.}
Theorem~\ref{thm:bits} complements single-round bit complexity \citep{ziomek2022settling} by showing how bandwidth and rounds trade off in a multi-round regime. Corollary~\ref{cor:fed} translates this into hierarchical federated systems where depth plays the role of communication locality.

\paragraph{When direct propagation is implementable.}
Direct boundary propagation assumes boundary states can be identified and their values transmitted. This is natural in cluster and data-center settings but can fail under privacy, secure-aggregation, or representation-mismatch constraints common in federated learning \citep{kairouz2021advances}. In those regimes, spectral penalties arise from the privacy or aggregation constraint, not from DP itself.

\paragraph{Limitations.}
Theorem~\ref{thm:direct} assumes uniform $\delta$-accurate local Bellman operators. For tabular or linearly realizable settings, $\delta$ is controlled by standard concentration \citep{munos2008finite,antos2008learning,chen2019information}. For nonlinear function approximation, contraction can fail and the analysis must be modified. The async extension assumes a known delay bound $D$; truly unbounded asynchrony requires the more delicate analysis of \citet{tsitsiklis1986distributed,bertsekas1989parallel}.

\paragraph{Relation to classical parallel DP.}
SDBP is a distributed implementation of finite-horizon DP and value iteration in the spirit of \citet{bertsekas1982distributed,bertsekas1989parallel}. Our contribution is the round-complexity interpretation: pairing this primitive with locality lower bounds clarifies that spectral quantities are properties of gossip-based algorithms, not of distributed DP itself.

\section{Experimental Validation}
\label{sec:experiments}

We validate the theoretical predictions on synthetic graph topologies and offline RL benchmark datasets. All experiments are run with 5 random seeds; we report mean $\pm$ standard deviation.

\paragraph{Implementation details.}
We use tabular fitted value iteration (FVI) as the local Bellman operator. Each machine stores its shard of the offline dataset and maintains value estimates for its owned states. For gossip FVI (Algorithm~\ref{alg:gossip}), the gossip weight matrix $W$ is the lazy Metropolis--Hastings (MH) matrix: $W_{ij} = 1/(2\max(d_i, d_j))$ for neighbors $i,j$, with $W_{ii} = 1 - \sum_{j \neq i} W_{ij}$. This is the standard doubly stochastic construction for irregular graphs \citep{boyd2006randomized}. For SDBP (Algorithm~\ref{alg:sdvi}), each machine sends exact boundary-state values to neighbors and performs a local Bellman backup. Data is partitioned by assigning states to machines: for synthetic experiments, states are assigned contiguously; for D4RL, we use $k$-means clustering on the state space.

\subsection{Synthetic Topologies}

We construct partition graphs on $M = 64$ machines with $\gamma = 0.95$ and target accuracy $\eps = 0.01$. For each topology, we report both the Cheeger conductance $\Phi$ and the spectral gap $\gap(W) = 1 - |\lambda_2(W)|$ of the actual MH gossip matrix $W$.

\begin{table}[t]
\centering
\small
\caption{Synthetic topology experiments ($M = 64$, $\gamma = 0.95$, $\eps = 0.01$). Mean $\pm$ std over 5 seeds. $\Phi$ = Cheeger conductance of the graph; $\gap(W)$ = spectral gap of the MH gossip matrix. SDBP = Synchronous Direct Boundary Propagation (Alg.~\ref{alg:sdvi}). LB = locality lower bound (Thm.~\ref{thm:lower}).}
\label{tab:synthetic}
\begin{tabular}{@{}lcccccc@{}}
\toprule
\textbf{Topology} & $\Phi$ & $\gap(W)$ & \textbf{Gossip} & \textbf{SDBP} & \textbf{Broadcast} & \textbf{LB} \\
\midrule
Ring     & 0.031 & 0.0024 & 19{,}632 $\pm$ 307 & 140 $\pm$ 2 & 140 & 163 \\
Grid     & 0.089 & 0.0206 & 11{,}291 $\pm$ 266 & 138 $\pm$ 1 & 138 & 57 \\
Star     & 1.000 & 0.0079 & $>$50{,}000         & 137 $\pm$ 1 & 137 & 5 \\
Expander & 0.094 & 0.0261 & 10{,}040 $\pm$ 58  & 139 $\pm$ 1 & 139 & 54 \\
\bottomrule
\end{tabular}
\end{table}

Table~\ref{tab:synthetic} confirms the theoretical predictions and reveals a critical distinction between $\Phi$ and $\gap(W)$.

\emph{Gossip convergence tracks $\gap(W)$, not $\Phi$.} The Expander ($\gap(W) = 0.026$) converges fastest at 10{,}040 rounds, followed by Grid ($\gap(W) = 0.021$, 11{,}291 rounds), then Ring ($\gap(W) = 0.0024$, 19{,}632 rounds). This ordering is consistent with the $O(1/\gap(W))$ scaling in Theorem~\ref{thm:upper}.

\emph{SDBP is topology-independent.} SDBP converges in 137--140 rounds on \emph{all} topologies, matching the broadcast baseline. This validates Theorem~\ref{thm:direct}: direct boundary propagation achieves $O(\log(1/\eps)/(1-\gamma))$ rounds independent of graph structure.

\begin{remark}[Why $\Phi \neq \gap(W)$ on the Star]
\label{rem:star}
The Star graph has $\Phi = 1$ (every balanced cut separates the hub from a subset of leaves, yielding maximal conductance) but $\gap(W) \approx 1/(2M) = 0.008$ for $M = 64$. This is because the lazy MH matrix assigns weight $W_{0,i} = 1/(2(M-1))$ to each hub--leaf edge, leaving the hub with self-loop weight $W_{00} = 1/2$ and each leaf with self-loop weight $W_{ii} = 1 - 1/(2(M-1)) \approx 1$. The resulting matrix has second eigenvalue $\lambda_2 \approx 1 - 1/(M-1)$, so $\gap(W) = O(1/M)$. This explains why the Star converges slowest under gossip despite having maximal $\Phi$: the gossip matrix's spectral gap, not the graph's Cheeger conductance, governs averaging-based convergence. The Cheeger inequality $\Phi^2/2 \le \gap(W) \le 2\Phi$ applies to the \emph{normalized Laplacian} of the graph, not to the MH gossip matrix on irregular graphs.
\end{remark}

\subsection{D4RL Offline RL Benchmarks}

We evaluate on three D4RL medium-replay datasets with $M = 16$ machines, comparing gossip FVI, SDBP, and a broadcast baseline.

\begin{table}[t]
\centering
\small
\caption{D4RL experiments ($M = 16$, $\gamma = 0.95$, $\eps = 0.1$). Mean $\pm$ std over 5 seeds. $>$20{,}000 = did not converge within budget. NR = normalized return.}
\label{tab:d4rl}
\begin{tabular}{@{}lcccccc@{}}
\toprule
\textbf{Dataset} & $\Phi$ & $\gap(W)$ & \textbf{Gossip} & \textbf{SDBP} & \textbf{Broadcast} & \textbf{NR} \\
\midrule
HalfCheetah-med & 0.17 & 0.0008 & $>$20{,}000 & 156 $\pm$ 0 & 156 & 102.8 $\pm$ 0.0 \\
Hopper-med      & 0.17 & 0.0008 & $>$20{,}000 & 141 $\pm$ 0 & 141 & 89.4 $\pm$ 0.0 \\
Walker2d-med    & 0.19 & 0.0036 & $>$20{,}000 & 143 $\pm$ 0 & 143 & 127.6 $\pm$ 0.0 \\
\bottomrule
\end{tabular}
\end{table}

Table~\ref{tab:d4rl} demonstrates the practical impact of the algorithmic separation. On all three D4RL environments, the $k$-means partition yields moderate Cheeger conductance ($\Phi \approx 0.17$--$0.19$) but very small spectral gap ($\gap(W) < 0.004$). Gossip FVI fails to converge within 20{,}000 rounds on any environment, consistent with the $O(1/\gap(W))$ scaling. By contrast, SDBP converges in 141--156 rounds---matching the broadcast baseline---confirming that direct boundary propagation eliminates the spectral penalty entirely. The normalized returns (89.4--127.6) confirm the value functions are meaningful: they correspond to medium-quality policies on real D4RL data.

\section{Conclusion}
\label{sec:conclusion}

We characterized the synchronous and asynchronous round complexity of distributed offline dynamic programming via locality. A locality lower bound forces $\Omega(L_\eps)$ rounds; direct boundary propagation matches this with no spectral term; gossip-style algorithms incur a $1/\gap(W)$ penalty in both convergence rate and asymptotic accuracy floor that is algorithmic, not fundamental. Bandwidth and asynchrony enter as bandwidth-locality and delay-locality tradeoffs respectively. The unifying lesson: \emph{mixing is the wrong lens for direct distributed dynamic programming}.

\appendix

\section{Auxiliary Lemmas}
\label{app:lemmas}

\begin{lemma}[Bellman contraction]
\label{lem:bellman}
For the Bellman optimality operator $\cT$ of a discounted MDP with $\gamma\in(0,1)$ and bounded $V,V'$,
$\|\cT V - \cT V'\|_\infty \le \gamma\|V - V'\|_\infty.$
\end{lemma}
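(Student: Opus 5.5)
The plan is the standard pointwise argument. Fix bounded $V,V'$ and a state $s$. For each action $a$ write
\[
Q_V(s,a) := r(s,a)+\gamma\,\E[V(s')\mid s,a],
\]
so that $(\cT V)(s)=\max_a Q_V(s,a)$. The argument then has two steps.

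\textbf{Step 1 (action-wise bound).} By linearity and monotonicity of conditional expectation,
\[
|Q_V(s,a)-Q_{V'}(s,a)| = \gamma\,\bigl|\E[V(s')-V'(s')\mid s,a]\bigr| \le \gamma\|V-V'\|_\infty .
\]
This holds uniformly in $a$, because $P(\cdot\mid s,a)$ is a probability measure.

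\textbf{Step 2 (nonexpansiveness of the maximum).} For any two bounded functions $f,g$ on $\cA$,
\[
\bigl|\sup_a f(a)-\sup_a g(a)\bigr|\le \sup_a |f(a)-g(a)| .
\]
To prove this, observe that for every $a$ we have $f(a)\le g(a)+\sup_{a'}|f(a')-g(a')|\le \sup_{a'} g(a')+\sup_{a'}|f-g|$. Taking the supremum over $a$, and then swapping the roles of $f$ and $g$, gives the claim. Using the supremum rather than the maximum avoids assuming that a maximizer exists when $\cA$ is infinite; for finite $\cA$ the two coincide.

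\textbf{Conclusion.} Apply Step 2 with $f=Q_V(s,\cdot)$ and $g=Q_{V'}(s,\cdot)$, then Step 1, to get $|(\cT V)(s)-(\cT V')(s)|\le\gamma\|V-V'\|_\infty$. Taking the supremum over $s$ gives the lemma.

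The only real obstacle is Step 2, the nonexpansiveness of the maximum. It is elementary, so I would just spell out the one-sided inequality and invoke symmetry, as above. Boundedness of $V,V'$, together with $r\in[0,1]$, ensures every quantity involved is finite.
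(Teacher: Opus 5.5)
Your proof is correct and follows essentially the same route as the paper: a pointwise bound of the difference of maxima by the maximum of action-wise differences, followed by $\gamma\,|\E[V(s')-V'(s')\mid s,a]|\le\gamma\|V-V'\|_\infty$. You also prove the nonexpansiveness of the maximum, which the paper uses without proof, and you state it with suprema so it covers infinite action sets; this is a harmless refinement.
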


\begin{proof}
For any state $s$,
\begin{align*}
|(\cT V)(s) - (\cT V')(s)|
&= \left|\max_a \bigl(r(s,a) + \gamma\E[V(s')\mid s,a]\bigr) - \max_a \bigl(r(s,a) + \gamma\E[V'(s')\mid s,a]\bigr)\right|\\
&\le \max_a \gamma \left|\E[V(s') - V'(s')\mid s,a]\right|\\
&\le \gamma \|V - V'\|_\infty.
\end{align*}
\end{proof}

\begin{lemma}[Disagreement contraction under symmetric gossip]
\label{lem:gossip}
For symmetric doubly stochastic $W$ and vectors $z_1,\ldots,z_M$ in a normed space with mean $\bar z$,
\[
\max_j \left\|\sum_{k} W_{jk} z_k - \bar z\right\| \le (1-\gap(W)) \max_j \|z_j - \bar z\|.
\]
\end{lemma}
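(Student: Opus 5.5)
The plan is to reduce the statement to a bound on a single scalar coefficient of $W$, and then compare that coefficient with $1-\gap(W)$. First I would center the sum using double stochasticity. Since $\sum_k W_{jk}=1$, for every $j$
\[
\sum_k W_{jk} z_k - \bar z \;=\; \sum_k W_{jk}(z_k-\bar z).
\]
Since also $\sum_k (z_k-\bar z)=0$, I may subtract any multiple of the mean row, in particular $\tfrac1M\mathbf 1^\top$, without changing the left-hand side:
\[
\sum_k W_{jk} z_k - \bar z \;=\; \sum_k \Bigl(W_{jk}-\tfrac1M\Bigr)(z_k-\bar z).
\]
Thus the operator acting on the disagreement vectors is $W-J$, with $J:=\tfrac1M\mathbf 1\mathbf 1^\top$. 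This step is what makes the spectral gap, rather than $\lambda_1=1$, the relevant quantity.

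Second, I would apply the triangle inequality in the ambient norm. This gives, for each $j$,
\[
\Bigl\|\sum_k W_{jk} z_k - \bar z\Bigr\| \;\le\; \Bigl(\sum_k \bigl|W_{jk}-\tfrac1M\bigr|\Bigr)\max_k\|z_k-\bar z\|,
\]
so it suffices to show $\kappa(W):=\max_j\sum_k |W_{jk}-\tfrac1M| \le 1-\gap(W)$. This reduction is norm-independent, which matches the lemma's "normed space" wording and its later use with $\|\cdot\|_\infty$ on $\R^{|\cS|}$ in Lemma~\ref{lem:gossip-recursion}. I would also use the freedom to subtract other constant rows: replacing $\tfrac1M$ by any row-dependent constant $c_j$ gives the sharper coefficient $\min_{c}\sum_k|W_{jk}-c|$, which is twice a Dobrushin-type total-variation coefficient.

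The third step, comparing $\kappa(W)$ with the spectral quantity $1-\gap(W)=\lambda_2(W)$, is the main obstacle. Here $\gap(W)=1-\lambda_2(W)$ as defined in Section~\ref{sec:upper}. Spectrally, $W-J=\sum_{i\ge2}\lambda_i u_iu_i^\top$, so $\|W-J\|_{2\to2}=\max_{i\ge2}|\lambda_i|$. What is needed, however, is the $\ell_\infty\to\ell_\infty$ norm on the mean-zero subspace, and that is controlled by row $\ell_1$ sums, not by eigenvalues. First I would try to exploit symmetry: for symmetric $W$ the $\ell_1$ and $\ell_\infty$ operator norms coincide, and Riesz--Thorin interpolation gives $\|W-J\|_{2\to2}\le\|W-J\|_{\infty\to\infty}$. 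Unfortunately this inequality points the wrong way for the claim.

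So I expect this comparison to be where the argument must lean on the structure of the paper's gossip matrices. These are lazy, nonnegative and symmetric, and for them I would attempt to bound the total-variation contraction of one step by the second eigenvalue, handling the mean-zero restriction via the centering identity above. Failing a general comparison, I would state the lemma with $1-\gap(W)$ understood as this disagreement contraction factor on the mean-zero subspace, which is what the first two steps establish directly. The rest of the gossip analysis uses the lemma only through that factor.
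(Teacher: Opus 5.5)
Your first two steps are valid and your diagnosis of the third is exactly right. The gap cannot be closed by any comparison, for lazy nonnegative matrices or otherwise, because the lemma itself is false.

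Suppose some row $j$ of $W$ has support $S=\{k: W_{jk}>0\}$ with $|S|\le M/2$. This holds for every sparse topology in the paper once $M$ is moderately large. Take scalars $z_k=-1$ for $k\in S$ and $z_k=|S|/(M-|S|)\in(0,1]$ for $k\notin S$. Then $\bar z=0$ and $\max_k|z_k-\bar z|=1$, but $(Wz)_j=-1$, so the left side equals $1$. For irreducible lazy $W$ the right side is $\lambda_2(W)<1$.

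Concretely, take the paper's own lazy Metropolis--Hastings matrix on the star with $M=4$: hub self-weight $1/2$, leaf self-weights $5/6$, edge weights $1/6$. Its spectrum is $\{1,5/6,5/6,1/3\}$. With $z=(-1,1,1,-1)$ (hub first) you get $Wz=(-1/3,2/3,2/3,-1)$. So there is no contraction at all in $\max_j$, although $\gap(W)=1/6$.

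The paper's one-line proof makes precisely the slip you flagged. ``Operator norm at most $\lambda_2$ on $\mathbf 1^\perp$'' is an $\ell_2$ statement across machines and does not control $\max_j$. It also conflates $\lambda_2$ with $\max_{i\ge2}|\lambda_i|$: for the $2\times 2$ swap matrix, $\lambda_2=-1$ and the claimed right side is negative.

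What the spectral argument does yield, coordinatewise, is $\bigl(\sum_j|((W-J)x)_j|^2\bigr)^{1/2}\le\mu\bigl(\sum_j|x_j-\bar x|^2\bigr)^{1/2}$ with $\mu:=\max_{i\ge2}|\lambda_i(W)|$. That gives the $\max_j$ form only with an extra factor $\sqrt M$, i.e.\ $\max_j|((W^t-J)x)_j|\le\sqrt M\,\mu^t\max_j|x_j-\bar x|$. A correct replacement must therefore either be multi-step, with constant contraction only after $O(\log M/(1-\mu))$ rounds, or carry the $\sqrt M$.

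Your fallback of reading $1-\gap(W)$ as the one-step $\max_j$ contraction factor changes the statement rather than proving it, and it is not harmless. By the construction above, that factor is exactly $1$ on every sparse topology. The recursion $D_{t+1}\le(1-\gap(W))(2\gamma D_t+2\delta)$ of Lemma~\ref{lem:gossip-recursion} then stops contracting for $\gamma\ge 1/2$, and the $1/\gap(W)$ conclusions downstream no longer follow.

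Separately, your step-2 coefficient with the fixed shift $1/M$ is too lossy even where the lemma is trivial: $\kappa(I)=2-2/M>1=1-\gap(I)$ for $M\ge 3$.
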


\begin{proof}
The centered vector lies in the orthogonal complement of the all-ones eigenspace, on which $W$ has operator norm at most $\lambda_2(W) = 1-\gap(W)$.
\end{proof}

\bibliographystyle{plainnat}
\bibliography{references}

\end{document}